\providecommand{\algorithmname}{Algorithm}
\numberwithin{equation}{section}
\newcommand{\InvSet}{\mathcal{S}^*}
\newcommand{\DP}{M}
\newcommand{\DPx}{M\left(x\right)}
\newcommand{\DPCo}{M^{*}}
\newcommand{\DPost}{f}
\newcommand{\DPosti}{f_{L_{1}}}
\newcommand{\DPostii}{f_{L_{2}}}
\begin{document}
\fancyhead{} 

\title[Congenial Privacy]{Congenial Differential Privacy under Mandated Disclosure}

\author{Ruobin Gong}
\affiliation{%
  \institution{Rutgers University}
  \streetaddress{110 Frelinghuysen Road}
  \city{Piscataway}
  \state{New Jersey}
  \postcode{08854}}
\email{ruobin.gong@rutgers.edu}

\author{Xiao-Li Meng}
\affiliation{%
  \institution{Harvard University}
  \streetaddress{1 Oxford St}
  \city{Cambridge}
  \state{Massachusetts}
  \postcode{02138}}
\email{meng@stat.harvard.edu}

\renewcommand{\shortauthors}{Gong and Meng}

\begin{abstract}	

Differentially private data releases are often required to satisfy a set of external constraints that reflect the legal, ethical, and logical mandates to which the data curator is obligated. The enforcement of constraints, when treated as post-processing, adds an extra phase in the production of privatized data. It is well understood in the theory of multi-phase processing that congeniality, a form of procedural compatibility between phases, is a prerequisite for the end users to straightforwardly obtain statistically valid results. Congenial differential privacy is theoretically principled, which facilitates transparency and intelligibility of the mechanism that would otherwise be undermined by ad-hoc post-processing procedures. We advocate for the systematic integration of mandated disclosure into the design of the privacy mechanism via  standard probabilistic conditioning on the invariant margins. Conditioning automatically renders congeniality because any extra post-processing phase becomes unnecessary. We provide both initial theoretical guarantees and a Markov chain algorithm for our proposal. We also discuss intriguing theoretical issues that arise in comparing congenital differential privacy and optimization-based post-processing, as well as directions for further research. 
	
\end{abstract}

%

\keywords{Belief Function, conditioning, invariants, post-processing, statistical intelligibility, uncongeniality, Monte Carlo}



\maketitle

\section{Privacy as data processing}\label{sec:intro}

\subsection{A blurry yet essential picture}
	
The curation and dissemination of large-scale datasets benefits science and society by supplying factual knowledge to assist discoveries, policy decisions, and promote transparency of information. As more data become accessible to more entities, however, the unobstructed access to information collected from individuals poses the risk of infringing on their privacy. Differential privacy is a mathematical concept that quantifies the extent of disclosure of confidential information in a database. It enjoys several advantages over its previous counterparts in statistical disclosure limitation. Most important of all, especially to data analysts who wish to conduct statistical inference on privatized data releases, is the transparency of the algorithm. The mechanism through which the privatized data release is generated can be spelled out in full, with its statistical properties fully understood. This enables analysts to incorporate it as part of a model, hence  permitting the statistical validity of the resulting inference \citep{gong2020transparent}.

The protection of confidential data with differential privacy relies on the careful design of a probabilistic mechanism, one that can veil the microscopic identities of individual respondents while preserving the macroscopic aspects of the data with high fidelity. The probabilistic nature of the mechanism is necessary, and enables a tradeoff as such to be made \citep{dinur2003revealing}. Typically, a differentially private mechanism injects a random perturbation into an otherwise deterministic query to be applied to the confidential database. One can say, then, that differentially private releases are ``blurry'' versions of the confidential data, just the same way a skilled Impressionist painter captures the essence of a pond of waterlilies without sketching out the contour of every petal and leaf.

When randomness is involved, however, certain truthful aspects of the data is invariably compromised, no matter how well-designed the privacy algorithm may be. Imagine if a picture of waterlilies was commissioned, not by an art collector, but by a botanist whose sole purpose is to study the structural formation of the plant, such as the exact length and width of its petals. She would be terribly disappointed at the Impressionist painting, even if it was the work of Claude Monet himself! 

Circumstances in practice dictates that aspects of the data release may be deemed as unfit to be tampered with. These are usually key statistics reflecting the fundamental purpose of data collection, as required by law, policy, or other external constraints as put forth by the stakeholders. The data curator is mandated to disclose these statistics accurately at any expense, while at the same time shielding the remainder part of the data release with a veil of privacy. This poses a challenge to the design of the privacy mechanism subject to mandated disclosure. The central question is, how to integrate data privatization, an inherently random act, with the mandated disclosure of partial yet deterministic information, while maintaining both the logical consistency of the overall data release and the quality of the privacy protection.

\subsection{Congenial privacy}

In this work we conceptualize the privacy mechanism as one of the many phases of data processing. The concept of congeniality, or rather {\it uncongeniality} \citep{meng1994multiple}, is then relevant. The theory of uncongeniality was developed for investigating a seemingly paradoxical phenomenon discovered by researchers dealing with imputations for the U.S. Decennial Census and similar public-use data files. \citeauthor{fay1992inferences} \citep{fay1992inferences} and \citeauthor{kott1995paradox} \citep{kott1995paradox} found that one can have inconsistent variance estimation for multiple imputation inference  \citep{rubin2004multiple}, even when both the imputation model and analysis procedure are valid.  The issue turned out to be a mathematical incompatibility between the imputation model and the analysis procedure, even if neither is incompatible with the underlying model that generates the data.  In other words, there is no probabilistically coherent model that can simultaneously imply the imputation model and the analysis procedure, a situation termed uncongeniality by \citeauthor{meng1994multiple} \citep{meng1994multiple}. To make matters worse, the imputation model, such as adopted by the Census Bureau, is typically not disclosed to the analyst of the imputed data, or at least not fully (e.g., due to confidential information used to help better predict the missing values).  The lack of transparency makes it impossible for the analyst to correct for the uncongeniality, and worse, even to realize the problem.  

The framework of uncongeniality was later generalized to the multivariate setting  \citep{xie2017dissecting} and to general multi-phase processing \citep{blocker2013potential}, which covers the current application by the same overarching principles. Two properties are critical for good privacy practice: transparency and congeniality. When the protection of privacy must observe mandated constraints, our proposal is to use conditional distributions as derived from the original unconstrained differential privacy mechanism  conditioning on invariant margins. This approach achieves both automatically.  Transparency is automatic, because the conditional distribution is determined by the original unconstrained distribution and the invariant constraints, both fully disclosed by design.   Congeniality stipulates the use of a \textit{single coherent probabilistic model} to ensure both differential privacy and mandated disclosure. This requirement is automatically satisfied when we use the conditional distribution derived from the original differential privacy mechanism, restricted solely to obey the mandated disclosure. A third advantage is that our proposal does not need any additional choice of procedural ingredients, such as projection distance, which is required by optimization-based post-processing such as adopted by the Census TopDown algorithm \citep{abowd2019census}.   
 
There is, however, no free lunch. The first price we pay for congenial privacy is computational. Sampling from a distribution truncated on some space, as determined by the invariant margins, is generally not trivial, especially when the truncated region is of irregular shape.  The second price is that we may pay more privacy loss budget than necessary, since the budget designed for the original mechanism depends on the sensitivity of the query measure on the unconstrained space,  which is larger than that for the constrained space.  When deriving the appropriate class of conditional distributions for the constrained mechanism, the new privacy loss budget should ideally be calibrated directly according to the query behavior on the constrained space, as opposed to be inherited from the unconstrained mechanism, which ensures a likely overly conservative level of privacy protection for the entire space. When the analytical complexity and computational requirements for the two approaches of budget calibration are similar, we certainly recommend the former.
 
\subsection{The mechanism of differential privacy}

Let $x = \left(x_1, \ldots, x_n\right)$ denote a database consisting of $n$ individuals, and $\mathcal{X}$ the space on which it is defined. A query function $s: \mathcal{X}\to\mathbb{R}^{d}$ embodies the knowledge contained in the database that stakeholders -- scientists, policy makers and the general public -- would like to learn. What determines the value of $s\left(x\right)$ is of course $x$, or equivalently all its component $x_i$ values, corresponding to individual respondents included in the database. It is precisely these individuals records, or the $x_i$'s, that are the subject of privacy protection. How can the data curator say useful things about $s\left(x\right)$, while saying barely anything about each of the $x_i$'s? 

The mechanism that can instill privacy into the curator's release appeals to randomness. A random query function $\DP:\mathcal{X}\to\mathbb{R}^{d}$ is said to satisfy $\epsilon$-differential privacy  \citep{dwork2006calibrating}, if for all pairs of neighboring datasets $\left(x,x'\right)\in\mathcal{X}\times\mathcal{X}$, we have that
\begin{equation}\label{eq:dp}
    P\left(\DPx\in B\right) \le\exp\left(\epsilon\right)P\left(\DP\left(x'\right)\in B\right)
\end{equation}
for all Borel-measurable sets $B\in\mathscr{B}\left(\mathbb{R}^{d}\right)$. In this work, the term {\it neighboring datasets} means that $x$ and $x'$ differ by exactly one individual's record, i.e. for some $j=1,\ldots,n$, $x_{j}\neq x_{j}'$ but  for all $i\neq j$, $x_{i}=x_{i}'$. Write $\mathtt{d}\left(x,x'\right)=1$ if $x$ and $x'$ are neighbors. This concept of neighbor is employed in the definition of $\epsilon$-bounded differential privacy \citep{abowd2019census}, and is distinct from the original formulation which defined neighbors as a pair that differ from each other by the addition or deletion of a single record.   
There are many ways to design an $\epsilon$-differentially private algorithm $\DP$, among which the most widely known and implemented are the Laplace and the Double Geometric algorithms \cite{dwork2006calibrating,fioretto2019differential}, both are additive $\epsilon$-differentially private mechanisms.

\begin{definition}[Laplace mechanism]
Let $s:\mathcal{X}\to\mathbb{R}^{d}$ be a deterministic query
function. The Laplace mechanism is given by
\begin{equation}\label{eq:additive}
\DPx:=s\left(x\right)+ \left(U_1,\ldots, U_d\right),
\end{equation}
where $U_i$'s are independent zero-mean Laplacian random variables each with dispersion parameter $\epsilon^{-1}\nabla\left(s\right)$, and
\[
\nabla\left(s\right)=\sup_{\left(x,x'\right)\in \mathcal{X}\times\mathcal{X} }\left\{ \left\Vert s\left(x\right)-s\left(x'\right)\right\Vert :\mathtt{d}\left(x,x'\right)=1\right\} 
\]
is the global sensitivity of $s$. When the database consists of binary records in an unrestricted domain, and $s$ is the counting or histogram query, $\nabla\left(s\right)=1$.
\end{definition}

\begin{definition}[Double Geometric mechanism]\label{def:2geo}
A random query mechanism $M$ is called the Double Geometric mechanism if it has the same functional form as \eqref{eq:additive}, with $U_i$ random variables defined on the integers with probability mass function
\begin{equation}\label{eq:2geo}
p_i\left(u \mid \epsilon \right) = \frac{1-a}{1+a}\cdot a^{\left|u\right|},
\end{equation}
where the parameter $a=a\left(\epsilon,s\right)=\exp\left(-\epsilon/\nabla\left(s\right)\right)$.
\end{definition}

Note that the definition of either the Laplace or the Double Geometric mechanism  presents not just one, but a collection of mechanisms that can be written as $\left\{ M_\epsilon\right\}$, indexed by $\epsilon > 0$ the {\it privacy loss budget} allocated to the mechanism in question. When regarded as a sequence of statistical procedures, $\epsilon$ serves as an indicator of the statistical quality of the output (with larger $\epsilon$ for higher quality) that can be used to offer interpretation and to guide its own choice.
This point will be immediately useful in Section~\ref{subsec:intelligibility}.

\subsection{Statistical intelligibility }\label{subsec:intelligibility}

An important reason that differential privacy is embraced by the statistics community is that it defines privacy in the language of probability, induced by the mechanism that injects randomness in the data release. An impactful consequence is that the distributional specification of the mechanism can be made fully transparent without sabotaging the promised protection. This opens the door for systematic analysis of the statistical property of mechanisms, which is in turn crucial to the accurate interpretation of statistical inference from privatized data releases \cite{gong2019exact}. The clarity both in definition and in implementation makes up the statistical intelligibility of differential privacy as a data processing procedure.

We discuss the statistical interpretation of the privacy mechanism, which is what served as inspiration for the conditioning approach to construct the invariant-respecting mechanism in the first place. The degree of protection exerted by a privacy mechanism on the confidential database is seen as a calculated limit on the statistical knowledge it is able to supply, as a function of the privacy loss budget allotted to the mechanism. Compare quantities
\begin{equation}\label{eq:pair2}
\pi\left(x_{i}=\omega\right)\;\;\text{and}\;\;\pi\left(x_{i}=\omega\mid M_{\epsilon}\left(x\right)\in B\right),	
\end{equation}
which are the analyst's prior probability about the value of the $i$th entry of the dataset, versus her posterior probability if an $\epsilon$-differentially private query released a report $B$. 
Thus, the statistical meaning of $M_{\epsilon}$ can be explained as follows.

\begin{theorem}\label{thm:posterior}
Let $x = \left(\cdots, x_i,\cdots \right) \in \mathcal{X}$ be the database, and $\{ M_{\epsilon}: \mathcal{X} \to \mathbb{R}^{d}, \epsilon > 0 \}$ a class of $\epsilon$-differentially private procedures operating on $x$. Then for every $B\in\mathscr{B}\left(\mathbb{R}^{d}\right)$, $\epsilon > 0$ and every prior probability $\pi$ the analyst harbors about $x_{i}$, 
\begin{multline}\label{eq:bound}
\pi\left(x_{i}=\omega\mid M_{\epsilon}\left(x\right)\in B\right) \in \\ \bigg[\exp\left(-\epsilon\right)\pi\left(x_{i}=\omega \right),\  \exp\left(\epsilon\right)\pi\left(x_{i}=\omega \right)\bigg].
\end{multline} 
\end{theorem}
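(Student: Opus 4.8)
The plan is to reduce the entire statement to a single likelihood-ratio bound via Bayes' theorem, and then let the defining inequality \eqref{eq:dp} do the work. Writing $g\left(x\right) = P\left(M_{\epsilon}\left(x\right)\in B\right)$ for the report probability attached to a fixed database, Bayes' theorem gives
\[
\frac{\pi\left(x_{i}=\omega\mid M_{\epsilon}\left(x\right)\in B\right)}{\pi\left(x_{i}=\omega\right)} = \frac{P\left(M_{\epsilon}\left(x\right)\in B\mid x_{i}=\omega\right)}{P\left(M_{\epsilon}\left(x\right)\in B\right)},
\]
so that \eqref{eq:bound} holds if and only if this ratio is sandwiched between $\exp\left(-\epsilon\right)$ and $\exp\left(\epsilon\right)$. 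The whole theorem thus becomes a statement about one quotient of likelihoods rather than about posteriors directly.

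The next step is to expand the marginal report probability in the denominator as a mixture over the analyst's prior. Letting $x_{-i}$ collect the remaining records, which I treat as known to the analyst (so that her only uncertainty concerns $x_{i}$), I would write
\[
P\left(M_{\epsilon}\left(x\right)\in B\right) = \sum_{\omega'}\pi\left(x_{i}=\omega'\right)\,g\left(\omega',x_{-i}\right),
\]
with $g\left(\omega',x_{-i}\right)=P\left(M_{\epsilon}\left(x\right)\in B\mid x_{i}=\omega'\right)$. The crucial observation is that for \emph{every} candidate value $\omega'$, the databases $\left(\omega,x_{-i}\right)$ and $\left(\omega',x_{-i}\right)$ differ in exactly the $i$th coordinate, hence satisfy $\mathtt{d}\left(x,x'\right)=1$; applying \eqref{eq:dp} in both directions therefore yields $\exp\left(-\epsilon\right)\le g\left(\omega,x_{-i}\right)/g\left(\omega',x_{-i}\right)\le\exp\left(\epsilon\right)$ simultaneously for all $\omega'$.

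Finally I would substitute these two-sided bounds termwise into the mixture. Replacing each $g\left(\omega',x_{-i}\right)$ by its lower bound $\exp\left(-\epsilon\right)g\left(\omega,x_{-i}\right)$ and using $\sum_{\omega'}\pi\left(x_{i}=\omega'\right)=1$ shows $P\left(M_{\epsilon}\left(x\right)\in B\right)\ge\exp\left(-\epsilon\right)g\left(\omega,x_{-i}\right)$, which rearranges to the upper bound $\pi\left(x_{i}=\omega\mid M_{\epsilon}\left(x\right)\in B\right)\le\exp\left(\epsilon\right)\pi\left(x_{i}=\omega\right)$; the symmetric substitution with the upper bound $\exp\left(\epsilon\right)g\left(\omega,x_{-i}\right)$ gives the matching lower bound. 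I expect the main obstacle to be not the algebra but the modeling convention concealed in the mixture representation: the clean termwise estimate requires that conditioning on $x_{i}=\omega$ leave the analyst's belief about $x_{-i}$ unchanged, i.e.\ an informed-adversary (or product-prior) assumption under which the other records are effectively fixed. I would state this convention explicitly, since without it the conditional distribution $\pi\left(x_{-i}\mid x_{i}\right)$ could itself shift with $\omega$ and the per-term neighbor relation would no longer suffice.
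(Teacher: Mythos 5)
Your proposal is correct and follows essentially the same route as the paper's own proof: expand the posterior via Bayes' theorem into a mixture over $\omega'$ in the denominator, then bound each term $P\left(M_{\epsilon}\left(x\right)\in B\mid x_{i}=\omega'\right)$ by $\exp\left(\pm\epsilon\right)P\left(M_{\epsilon}\left(x\right)\in B\mid x_{i}=\omega\right)$ using the neighboring-database property, and sum using $\sum_{\omega'}\pi\left(x_{i}=\omega'\right)=1$. The one genuine addition is your explicit statement of the informed-adversary convention (that the analyst's uncertainty concerns only $x_{i}$, with $x_{-i}$ effectively fixed), which the paper's proof uses implicitly without comment; flagging it is a legitimate refinement rather than a different argument.
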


\begin{proof}
The posterior probability $\pi\left(x_{i}=\omega\mid M_{\epsilon}\left(x\right)\in B\right)$ can be written as
\begin{align*}
&\frac{P\left(M_{\epsilon}\left(x\right)\in B\mid x_{i}=\omega\right)\pi\left(x_{i}=\omega\right)}{\sum_{\omega'}P\left(M_{\epsilon}\left(x\right)\in B\mid x_{i}=\omega'\right)\pi\left(x_{i}=\omega'\right)}.
\end{align*}
The result then follows immediately from the fact that 
$M_{\epsilon}$ is $\epsilon$-private, which means that for any 
$B\in\mathscr{B}\left(\mathbb{R}^{d}\right)$, 
\[
\exp{(-\epsilon)}\le \frac{ P\left(M_{\epsilon}\left(x\right)\in B\mid x_{i}=\omega'\right)}{P\left(M_{\epsilon}\left(x\right)\in B\mid x_{i}=\omega\right)}\le  \exp{(\epsilon)}. 
\]
\end{proof}

Theorem~\ref{thm:posterior} says that, any release generated by an $\epsilon$-differentially private procedure sharpens the analyst's knowledge about $x_{i}$ by at most a factor of $\exp(\epsilon)$. This interpretation provides a direct link between the differential privacy promise and the actual posterior risk of disclosure due to the release of the random query $M_{\epsilon}$.

Recall that the definition of the Laplace and the Double Geometric mechanisms are both well-defined for any $\epsilon > 0$. However, in the limiting case of $\epsilon \to 0$,   i.e. the privacy loss budget becomes increasingly restrictive, both algorithms amount to adding noise with increasingly large variance to the confidential query. At $\epsilon = 0$, neither mechanism remains well-defined since the distributions of the noise component become improper due to the infinite cardinality of their respective domains. Nevertheless, the definition of $\epsilon$-differential privacy allows for the expression with $\epsilon=0$. A mechanism is $0$-differentially private if one cannot gain any discriminatory knowledge from its release about the underlying database whatsoever. In other words, the analyst's knowledge about the individual state of $x_i$ must remain the same as her prior. This notion can be explained consistently with Theorem~\ref{thm:posterior}, by observing that
\begin{equation}
\lim_{\epsilon\to0}\pi\left(x_{i}=\omega\mid M_{\epsilon}\left(x\right)\in B\right)=\pi\left(x_{i}=\omega\right),	
\end{equation}
where the limit is implied by \eqref{eq:bound}. 
This inspires the following deliberate construction of $M_{0}$ as a $0$-differentially private procedure.

\begin{definition}[$0$-differentially private procedure]\label{def:0dp}
For $\{ M_{\epsilon}\}$ a class of $\epsilon$-differentially private procedures well-defined for $\epsilon > 0$ but not $\epsilon = 0$, define $M_0$ as the $0$-differentially private procedure such that for every prior probability $\pi$,
\begin{equation}\label{eq:s0x}
\pi\left(x_{i}=\omega\mid M_{0}\left(x\right)\in B\right)=\pi\left(x_{i}=\omega\right),\; \forall B\in\mathscr{B}\left(\mathbb{R}^{d}\right).	
\end{equation}
\end{definition}

Definition~\ref{def:0dp} grants conceptual continuity to $M_{0}$, a perfectly meaningful object in the privacy sense but lacking statistical intelligibility from the mechanistic point of view. For practical purposes, $M_{0}$ should be taken to mean the {\it suppression} procedure, which supplies {\it vacuous} knowledge to the analyst about the state of affairs of the database. Theoretically, the meaning of  $M_{0}$ as a probabilistic mechanism cannot be supplied by ordinary probabilities, because any probability specification represents a set of specific knowledge about relative frequencies of any pair of states.
However, its meaning can be quantified precisely in the more general framework of \textit{imprecise probability}, as Section~\ref{sec:discussion} will discuss.

\section{Constructing congenial privacy}\label{sec:coprocess}

\subsection{Privacy with invariants}

While privacy protection is called for, the data curator may be simultaneously mandated to disclose certain aspects of the data as they are exactly observed, without subjecting them to any privacy protection. This collection of information is referred to as {\it invariant information}, or {\it invariants}. 
In practice, invariants are often defined according to a set of exact statistics calculated based on the confidential database \citep{ashmead2019effective}. For example, suppose $s$ is the histogram query which tabulates the population residing in each county of the state of New Jersey from the 2020 U.S. Census. When producing a differentially private version of the histogram, the Census Bureau is constitutionally mandated to report the total population of each state as enumerated.  This means that the privatized histogram $\DP (x^*)$ must possess the same total population size as $s(x^*)$, where $x^*$ the confidential Census microdata; or in notation, $\left\Vert \DP\left(x^*\right) \right\Vert = \left\Vert s\left(x^{*}\right)\right\Vert$.

Suppose that the Double Geometric mechanism is to be applied to the histogram query $s$. Due to the random nature of the perturbations, a single realization of the mechanism will with high probability produce $\left\Vert \DP\left(x^*\right) \right\Vert \neq \left\Vert s\left(x^{*}\right)\right\Vert$.
Furthermore $\DP\left(x^*\right)$ has a positive probability of consisting negative cell counts, which is logically impossible of the confidential query $ s\left(x^{*}\right) $.  The challenge of privacy preservation under mandated disclosure is thus to find an alternative mechanism, 
say $\tilde{M}$, such that every realization of $\tilde{M}$ meets all the requirement of mandated information disclosure, 
while preserving the promise of differential privacy.

Let $\mathcal{X}^{*}\subset \mathcal{X}$ be the set of $x$'s that obey the given invariants.
In turn,  $\mathcal{X}^{*}$ defines the set of values that the query must satisfy as
\begin{equation}\label{eq:invset}
\InvSet=\left\{ s\left(x\right)\in\mathbb{R}^{d}:x\in\mathcal{X}^{*}\right\}.	
\end{equation}
Note that implicitly, $\InvSet = \InvSet\left(x^{*}\right)$ is a set-valued function of the confidential dataset $x^{*}$, because the invariant knowledge we intend to impose on the private release is implied by $x^{*}$. 

A random mechanism $\tilde{M}$ satisfies the mandated disclosure if
\begin{equation}
\tilde{M}\left(x\right)\in\InvSet,\quad\forall x\in\mathcal{X}^{*}.
\end{equation}
That is, whenever applied to a database conformal to the mandated disclosure,  with mathematical certainty $\tilde{M}$ is also conformal to the mandated disclosure. The size and complexity of the restricted $\InvSet$ (and $\mathcal{X}^{*}$) relative to their original spaces are crucial to the overall extent to which privacy of the residual information in the database can be protected, a point we will revisit in Section~\ref{subset:gamma}. 

There may be many ways to construct a random mechanism $\tilde{M}$, but all are not equally desirable. We argue that $\tilde{M}$ should be constructed in a principled manner, and a constructive way to achieve that is to use conditional distributions of unconstrained privacy mechanisms. The resulting mechanism can be easily tuned to
retain its differential privacy promise, while ensuring its congenial integration into the data processing pipeline, preserving the statistical intelligibility of its releases.

\subsection{Imposing invariants via conditioning}\label{subsec:cond}

Let $\DP$ be a valid $\epsilon$-differentially private mechanism, which generates outputs that typically do not obey the invariant requirement $\DPx \in \InvSet$, even if $x \in \mathcal{X}^{*}$. A natural idea to force the requirement onto $\DP$ is via conditioning. Define a modified privatization mechanism $\DPCo$, such that the probability distribution it induces is the same as the conditional distribution of $\DP$ subject to the constraint that $ \DPx\in\InvSet$. For what's next, we'll use the notation $Z\overset{L}{=}W$ to mean $Z$ and $W$ are identically distributed, and $\DPx\mid \DPx\in\InvSet$ denotes a well-specified conditional distribution $P(\DPx\mid \DPx\in\InvSet)$. Also assume for now $P(\DPx\in\InvSet)>0$. We have the following theorem.

\begin{theorem}\label{thm:2epsilon}
 Let $x^{*} \in \mathcal{X}$ be the confidential database, and  $\mathcal{X}^{*} \subset \mathcal{X}$ the invariant subset to which $x^*$ conforms. The deterministic function $s:\mathcal{X}\to\mathbb{R}^{d}$ is a query, and the implied $\InvSet\in\mathscr{B}\left(\mathbb{R}^{d}\right)$ is defined by \eqref{eq:invset}. Let $M$ be an $\epsilon$-differentially private mechanism based on $s$, and $\DPCo$ be a constrained mechanism such that
\begin{equation}
\DPCo\left(x\right)\overset{L}{=}\DPx\mid \DPx\in\InvSet.
\end{equation}
Then for all invariant-conforming pairs of datasets that are $k$-neighbors, i.e. $\left(x,x'\right) \in \mathcal{X}^{*}\times \mathcal{X}^{*}$ such that $\mathtt{d}\left(x,x'\right)=k$, there exists a real-valued $\gamma \in [-1, 1]$ such that for all $B\in\mathscr{B}\left(\mathbb{R}^{d}\right)$, 
\begin{equation}\label{eq:main-theorem}
P\left(\DPCo\left(x\right)\in B\right)\le \exp\left(\left(1 + \gamma\right)k\epsilon\right)P\left(\DPCo\left(x'\right)\in B\right).
\end{equation}
\end{theorem}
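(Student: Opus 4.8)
The plan is to write the law of the constrained mechanism as a normalized restriction of the unconstrained one, and then to track separately how the restricted numerator and the normalizing constant respond to replacing $x$ by a $k$-neighbor $x'$. By construction, for every $B\in\mathscr{B}\left(\mathbb{R}^{d}\right)$,
\[
P\left(\DPCo\left(x\right)\in B\right)=\frac{P\left(\DPx\in B\cap\InvSet\right)}{P\left(\DPx\in\InvSet\right)},
\]
and likewise for $x'$. The guiding observation is that all dependence on $B$ sits in the numerator, whereas the ratio of the two normalizers is a single scalar depending only on the pair $\left(x,x'\right)$; this scalar is exactly what I will absorb into $\gamma$.

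First I would upgrade the one-step guarantee \eqref{eq:dp} for $\DP$ to a $k$-step (group privacy) guarantee. Since $x$ and $x'$ differ in exactly $k$ of their $n$ coordinates, I interpolate a chain $x=x_{0},x_{1},\ldots,x_{k}=x'$ in which consecutive entries are $1$-neighbors, altering one record at a time. The intermediate $x_{i}$ need not belong to $\mathcal{X}^{*}$, but this is harmless because \eqref{eq:dp} holds across all of $\mathcal{X}\times\mathcal{X}$; composing it along the chain gives, for every Borel set $A$,
\[
\exp\left(-k\epsilon\right)P\left(\DPxx\in A\right)\le P\left(\DPx\in A\right)\le\exp\left(k\epsilon\right)P\left(\DPxx\in A\right).
\]

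Next I would apply this two-sided bound twice. Choosing $A=B\cap\InvSet$ bounds the numerator by $\exp\left(k\epsilon\right)P\left(\DPxx\in B\cap\InvSet\right)$, while choosing $A=\InvSet$ controls the normalizers, so that $r:=P\left(\DPxx\in\InvSet\right)/P\left(\DPx\in\InvSet\right)$ satisfies $\exp\left(-k\epsilon\right)\le r\le\exp\left(k\epsilon\right)$. Substituting and rewriting $P\left(\DPxx\in B\cap\InvSet\right)=P\left(\DPCo\left(x'\right)\in B\right)\,P\left(\DPxx\in\InvSet\right)$ yields
\[
P\left(\DPCo\left(x\right)\in B\right)\le\exp\left(k\epsilon\right)\,r\,P\left(\DPCo\left(x'\right)\in B\right).
\]
Setting $\gamma:=\left(k\epsilon\right)^{-1}\log r$ then makes this inequality read exactly as \eqref{eq:main-theorem}, and the two-sided bound on $r$ forces $\gamma\in\left[-1,1\right]$. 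Bounding the numerator before dividing also dispenses with any division-by-zero worry: if $P\left(\DPCo\left(x'\right)\in B\right)=0$ then $P\left(\DPxx\in B\cap\InvSet\right)=0$, and group privacy forces $P\left(\DPCo\left(x\right)\in B\right)=0$, so the claim holds trivially.

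The main obstacle is conceptual rather than computational: recognizing that conditioning contributes precisely one extra multiplicative factor $r$ --- the relative probability that the two neighbors land inside the invariant set --- and that group privacy pins this factor to $\left[\exp\left(-k\epsilon\right),\exp\left(k\epsilon\right)\right]$, which is what legitimizes the existence of $\gamma\in\left[-1,1\right]$. I would also record the standing assumption $P\left(\DPx\in\InvSet\right)>0$ (already imposed in the text) so that $\DPCo$ and $r$ are well-defined, and emphasize that $\gamma$ depends on $\left(x,x'\right)$ and may be negative; this is the precise sense in which conditioning can tighten the effective privacy cost, rather than merely inflating it to the naive $2k\epsilon$ bound obtained by setting $\gamma=1$.
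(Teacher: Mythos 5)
Your proposal is correct and follows essentially the same route as the paper's own proof: both factor the conditional probability ratio into a restricted-numerator ratio times a normalizer ratio, bound the former by $\exp\left(k\epsilon\right)$ via group privacy, and absorb the latter into $\gamma$. The only notable difference is that you define $\gamma$ per pair as $\left(k\epsilon\right)^{-1}\log r$, which can be negative, whereas the paper maximizes the normalizer ratio over all $k$-neighboring pairs in $\mathcal{X}^{*}\times\mathcal{X}^{*}$ to obtain a uniform, publicly computable $\gamma^{*}\in\left[0,1\right]$; your per-pair choice has the minor advantage of making immediate the paper's subsequent remark that $\gamma<0$ is possible.
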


\begin{proof}
For a pair of $k$-neighboring and $\InvSet$-conforming datasets $\left(x,x'\right)$ and any $B\in\mathscr{B}\left(\mathbb{R}^{d}\right)$,
\begin{align*}
\frac{P\left(\DPCo\left(x\right)\in B\right)}{P\left(\DPCo\left(x'\right)\in B\right)} & =  \frac{P\left(\DPx\in B\mid \DPx\in\InvSet\right)}{P\left(M\left(x'\right)\in B\mid M\left(x'\right)\in\InvSet\right)} \\ 
 & =  \frac{P\left(\DPx\in B\cap\InvSet\right)}{P\left(M\left(x'\right)\in B\cap\InvSet\right)}  \cdot \frac{P\left(M\left(x'\right)\in\InvSet\right)}{P\left(\DPx\in\InvSet\right)}.
\end{align*}	
Clearly each of the last two ratios above is bounded above by $\exp\left(k\epsilon\right)$ and below by
$\exp\left(-k\epsilon\right)$ because $M$ is $\epsilon$-differentially private. Consequently, if we let
\begin{align}
\gamma^* =\frac{1}{{k\epsilon}} \log  \left[\max_{\substack{\left(x,x'\right)\in\mathcal{X}^*\times\mathcal{X}^*\\
\mathtt{d}\left(x,x'\right)=k
}
} \frac{P\left(M\left(x'\right)\in\InvSet\right)}{P\left(\DPx\in\InvSet\right)}\right], \label{eq:gamma_star}
\end{align}
then $\gamma^*\in [0, 1]$ and it is a known constant because $\InvSet$ is public. Then, \eqref{eq:main-theorem} holds for any $\gamma \in \left[\gamma^*, 1 \right]$, and certainly for $\gamma = 1$. 
\end{proof}

Our proof above might create an impression that only $\gamma\ge 0$ is permissible, but Sections~\ref{sec:postprocess} and~\ref{subset:gamma} will supply two examples both with $\gamma < 0$.
Negative $\gamma$ may sound paradoxical, for it seems to suggest that better privacy protection can be achieved by disclosing some information. However, we must be mindful that differential privacy is not about protecting privacy in absolute terms.  Rather, it is about controlling the \textit{additional} disclosure risk from releasing the privatized data to the users (or hackers), relative to what they know before the release. The presence or absence of mandated invariants would \textit{ex ante} constitute two different bodies of knowledge, hence any additional privacy protection would carry different interpretations too.
We also emphasize that Theorem~\ref{thm:2epsilon} generalizes to cases where $P(M(x) \in \InvSet)=0$, such as when it is a linear subspace of $\mathbb{R}^{d}$ and the privacy mechanism is continuous. The proof is a bit more involved in order to properly define $P(\DPx \mid \DPx\in \InvSet)$, which we will discuss in future work. However, this complication is not a concern for discrete privatization mechanisms, such as within the Census TopDown algorithm \citep{abowd2019census}. 

Theorem~\ref{thm:2epsilon} holds broadly for arbitrary kinds of $\epsilon$-differentially private mechanisms, as well as any deterministic invariant information about either the database or the query function that can be expressed in a set form. It lends itself to the same kind of posterior interpretation enjoyed by unconstrained differentially private mechanisms. Specifically, if $\DPCo_{\epsilon}$ is the constrained differentially private procedure constructed based on the unconstrained procedure $M_{\epsilon}$, according to the specifications of Theorem~\ref{thm:2epsilon}, then for all $x\in\mathcal{X}^{*}$ such that $\exists  x'\in\mathcal{X}^{*}$ so that $\mathtt{d}\left(x,x'\right)=1$, and  $\forall B\in \mathscr{B}\left(\InvSet\right)$, the analyst's posterior probability $\pi\left(x_{i}=\omega\mid x\in\mathcal{X}^{*},M_{\epsilon}^{*}\left(x\right) \in B\right)$ is bounded in between 
\begin{multline*}
\bigg[\exp\left(-\left(1+\gamma\right)\epsilon\right)\pi\left(x_{i}=\omega\mid x\in\mathcal{X}^{*}\right), \\ \exp\left(\left(1+\gamma\right)\epsilon\right)\pi\left(x_{i}=\omega\mid x\in\mathcal{X}^{*}\right)\bigg].
\end{multline*} 
This an interval that bears structural resemblance to~\eqref{eq:bound}, thanks to the conditional nature of $M^{*}$ which allows for statistical information from privacy mechanisms, constrained or otherwise, to be interpreted in the same (hence congenial) way. 

The definition of $\epsilon$-differential privacy has the property that, if a mechanism $M$ is $\epsilon_1$-differentially private, then for all $\epsilon_2 \ge \epsilon_1$, $M$ is also $\epsilon_2$-differentially private. If the invariant information does not {\it substantially disrupt} the neighboring structure of the sample space of the database, a notion we will make precise later in Section~\ref{sec:discussion}, what Theorem~\ref{thm:2epsilon} says is that enforcing the invariant $\mathcal{X}^{*}$ onto the unconstrained mechanism $M$ via conditioning costs $\left(1+\gamma\right)$ times -- and {\it at most twice} since $\gamma$ can always be set to $1$ -- the privacy loss budget allotted to $M$. When a more cost-effective value of $\gamma$ is hard to determine, a simplest way to ensure privacy loss budget $\epsilon$ for $\DPCo$ is to use a budget of $\epsilon_{0}=\epsilon/2$ for the unconstrained mechanism $M$ to begin with; see Section~\ref{sec:example}.

\subsection{An Monte Carlo Implementation}\label{sec:computation}

Let $p$ denote the probability distribution, either a mass function or a density function, induced by the unconstrained differentially private algorithm $\DP$ which depends on the confidential query $s^{*}$. Further denote $p^{*}$ to be the corresponding conditional distribution of $p$ constrained on the invariant set $\InvSet$. 
The constrained privacy mechanism requires samples from $p^{*}$. A simplest, though often inefficient or event impractical, method is rejection sampling. Since 
\[
p^{*}\left(s\right)=\begin{cases}
\begin{array}{c}
c^{-1}p\left(s\right)\\
0
\end{array} & \begin{array}{c}
\text{if }s\in\InvSet\\
\text{otherwise},
\end{array}\end{cases}
\]
where $c=\int_{\InvSet}p\left(s\right)ds$,
one can opt for a proposal density $q$ with support $\InvSet$ such that $\sup_{s\in \InvSet} [p(s)/q(s)]\le R$, and accept a sample $s\sim q$ with probability $p(s)/Rq(s)$. This encompasses the option to set $q=p$, the unconstrained
privacy kernel itself, and keep sampling until the sample falls into $\InvSet$. This strategy is clearly inefficient in general, and impossible when $c=0$. 

Efficient algorithm tailor-made to specifications of $\InvSet$ are possible. Here, we present in Algorithm~\ref{alg:MIS} a generic approach based on Metropolized Independent Sampling \citep[MIS;][]{liu1996metropolized}, for the most common case in which the mandated invariants are expressed in terms of a consistent system of linear equalities and inequalities
\begin{equation*}
\InvSet=\left\{ s \in \mathbb{R}^{d}:As=a,Bs\ge b\right\}.
\end{equation*}
Here $A$ and $B$ are $d_{A}\times d$ and $d_{B}\times d$ matrices with ranks $d_{A}< d$ and $d_{B}< d$ respectively, and $a$ and $b$ vectors with length $d_{A}$ and $d_{B}$ respectively. The algorithm requires a proposal index set $\mathcal{I}$, a subset of $\left\{ 1,\ldots,d\right\}$ of size $d-d_{A}$ such that $\text{rank}\left(A_{[\mathcal{I}^{c}]}\right)=d_{A}$, where $A_{[\mathcal{J}]}$ is the submatrix of $A$ consisting of all columns whose indices belong to $\mathcal{J}$.  For each $A$, the choice of $\mathcal{I}$ may not be unique, and can have a potential impact on the efficiency of the algorithm.

\begin{algorithm}[!h]
\caption{Metropolized Independent Differentially Private Sampler with Invariants}
\label{alg:MIS}
\begin{tabbing}
\enspace{}Input: unconstrained privacy mechanism $p$, \\ 
\enspace{}\;\, \qquad{}confidential query $s^*$, invariant parameters $(A,a,B,b)$, \\
\enspace{}\;\, \qquad{}proposal distribution $q$, proposal index set $\mathcal{I}$, \\
\enspace{}\;\, \qquad{}initial value $s^{(0)}\in\InvSet$, integer $\text{nsim}$; \\
\enspace{}Iterate: for $t=0,1,\ldots,\text{nsim}-1$, at $t+1$: \\
\qquad{}step 1, propose $\tilde{s}$: \\ 
\qquad{} \qquad{}\; 1-1. sample $\tilde{s}{}_{\mathcal{I}}\sim q$; \\
\qquad{} \qquad{}\; 1-2. solve for $\tilde{s}_{\mathcal{I}^{c}}$ in $A_{\left[\mathcal{I}\right]}\tilde{s}_{\mathcal{I}}+A_{\left[\mathcal{I}^{c}\right]}\tilde{s}_{\mathcal{I}^{c}}=a$; \\ 
\qquad{} \qquad{}\; 1-3. write $\tilde{s}=\left(\tilde{s}_{\mathcal{I}},\tilde{s}_{\mathcal{I}^{c}}\right)$; \\ 
\qquad{}step 2, compute $\alpha(s^{\left(t\right)},\tilde{s})=\min\left\{ 1,\frac{p(\tilde{s}){\bf 1}\left(B\tilde{s}\ge b\right)q\left(s_{\mathcal{I}}^{(t)}\right)}{p\left(s^{(t)}\right)q\left(\tilde{s}_{\mathcal{I}}\right)}\right\}$; \\
\qquad{}step 3, set $s^{\left(t+1\right)}=\tilde{s}$ with probability
$\alpha(s^{\left(t\right)},\tilde{s})$, \\
\qquad{} \qquad{}\; otherwise set
$s^{\left(t+1\right)}=s^{\left(t\right)}$. \\
\enspace{}Output: a set of draws $\{s^{\left(t\right)}\}$, $t = 1,\ldots,\text{nsim}$. 
\end{tabbing}	 
\end{algorithm} 

This algorithm is applicable to both discrete and continuous data and privatization schemes, and it does not require the normalizing constant for $p^{*}$.
In Section~\ref{sec:example} below, we use it to construct a mechanism for differentially private demographic contingency tables subject to both linear equality and inequality invariant constraints.

\section{Contingency Table with Invariants}\label{sec:example}

\begin{table*}[ht]
\centering
\begin{tabular}{rllllllllllllll}
  \toprule
 & $<$ 5 & 6-10 & 11-15 & 16-17 & 18-19 & 20 & 21 & 22-24 & 25-29 & 30-34 & 35-39 & 40-44 & 45-49 & 50-54 \\ 
  \midrule
Female & 8 & 6 & 3 & 6 & 4 & 4 & 4 & 8 & 5 & 7 & 7 & 6 & 1 & 5 \\ 
  Male & 3 & 4 & 5 & 8 & 6 & 4 & 5 & 5 & 5 & 6 & 10 & 7 & 3 & 2 \\ 
   &  &  &  &  & 55-59 & 60-61 & 62-64 & 65-66 & 67-69 & 70-74 & 75-79 & 80-84 & 85+ & Total \\   \cmidrule{6-15}
  Female &  &  &  &  & 4 & 4 & 9 & 6 & 2 & 8 & 8 & 8 & 7 & {\bf 130} \\ 
  Male &  &  &  &  & 5 & 11 & 6 & 4 & 7 & 4 & 5 & 3 & 8 & {\bf 126} \\ 
  \midrule
   Voting &  &  &  & {\bf 43} &  &  &  &  &  &  &  &  & {\bf 213} & {\bf 256} \\ 
\bottomrule
\end{tabular}

\begin{tabular}{rllllllllllllll}
  \toprule
 & $<$ 5 & 6-10 & 11-15 & 16-17 & 18-19 & 20 & 21 & 22-24 & 25-29 & 30-34 & 35-39 & 40-44 & 45-49 & 50-54 \\ 
  \midrule
Female & 6 & 6 & 4 & 3 & 2 & 10 & 2 & 5 & 6 & 7 & 5 & 6 & 0 & 5 \\ 
  Male & 9 & 4 & 5 & 6 & 3 & 4 & 5 & 5 & 3 & 4 & 7 & 8 & 5 & 3 \\   \cmidrule{6-15}
   &  &  &  &  & 55-59 & 60-61 & 62-64 & 65-66 & 67-69 & 70-74 & 75-79 & 80-84 & 85+ & Total \\  \cmidrule{6-15}
  Female &  &  &  &  & 4 & 5 & 10 & 8 & 3 & 8 & 8 & 12 & 5 & {\bf 130} \\ 
  Male &  &  &  &  & 2 & 23 & 8 & 2 & 6 & 3 & 0 & 3 & 8 &  {\bf 126} \\ 
  \midrule
   Voting &  &  &  & {\bf 43} &  &  &  &  &  &  &  &  & {\bf 213} & {\bf 256} \\  
   \bottomrule
\end{tabular}

\caption{A confidential sex $\times$ age contingency table (top) and a corresponding constrained differentially private (bottom) release, subject to total population, proportion female population and voting age population constraints (bold).  \label{tab:mis-sample}}
\end{table*}


The table we consider is of dimension $2 \times 23$, with rows representing sex (male/female), and columns representing age bucketed roughly every four years, with finer buckets around key age ranges such as 18, 21 and 60. This data structure corresponds to the 2010 Census Table \#P12 and 2020 Census Table \#P7 \cite{census20202010}, one of the most referenced type of contingency table releases by the Census Bureau at various geographic levels. For the purpose of computational illustration, this example will use simulation to construct synthetic datasets that represent the confidential Census demographic data.

 Let $s$ be a vector of length $46$, denoting the row-vectorized contingency table. The constraints to be imposed on the differentially private table include 
\begin{enumerate}
	\item total population;
	\item proportion of female population;
	\item total voting age population (18+ age); and
	\item nonnegative table entries.
\end{enumerate}
Items (1) to (3) constitute equality constraints, and item (4) inequality constraints. 
The unconstrained privacy mechanism that serves as the basis of our construction is the Double Geometric mechanism, with distribution function 
\begin{equation*}
p\left(s\right)=\prod_{i=1}^{46}p_{i}\left(s_{i}-s_{i}^{*}\mid\epsilon\right),
\end{equation*}
where $p_i$ is as defined in \eqref{eq:2geo}, and the privacy loss budget set to $\epsilon = 0.5$ per cell. The proposal distribution is set to be of the same family as does the unconstrained privatization algorithm, but it is given a distinct dispersion parameter $\tilde{\epsilon}$ to tune for best algorithmic performance, in this case the acceptance probability of the algorithm. The proposal distribution function for $\tilde{s}_{\mathcal{I}}$, the ($d-d_{A}$)-length subvector of the $k$th proposal $\tilde{s}$, is 
\begin{equation*}
q\left(s\right)=\prod_{i\in \mathcal{I}}p_{i}\left(s_{i}-s_{i}^{*}\mid\tilde{\epsilon}\right),
\end{equation*}
where $\mathcal{I}$ is chosen to be $\left\{2,\ldots,22, 24,\ldots 45\right\}$. The remainder coordinates of the $k$th proposal, $\tilde{s}_{\mathcal{I}^{c}}$, is solved according to the equality constraint $A\tilde{s}=a$. 

Table~\ref{tab:mis-sample} displays a simulated confidential table (top) and a constrained differentially private table (bottom) based on the confidential table, where the draw is produced by Algorithm~\ref{alg:MIS}. In this case, setting $\tilde{\epsilon} = 0.6$ yields the best acceptance probability, with acceptance probability at $1.68\%$, as shown in Figure~\ref{fig:mis-dispersion} in Appendix~\ref{appendix:mis} alongside traceplots for the second and the first cells of the table, with the former index belonging to $\mathcal{I}$ and the latter not.

 \begin{figure}[h]
   \centering
   \includegraphics[width=.45\textwidth]{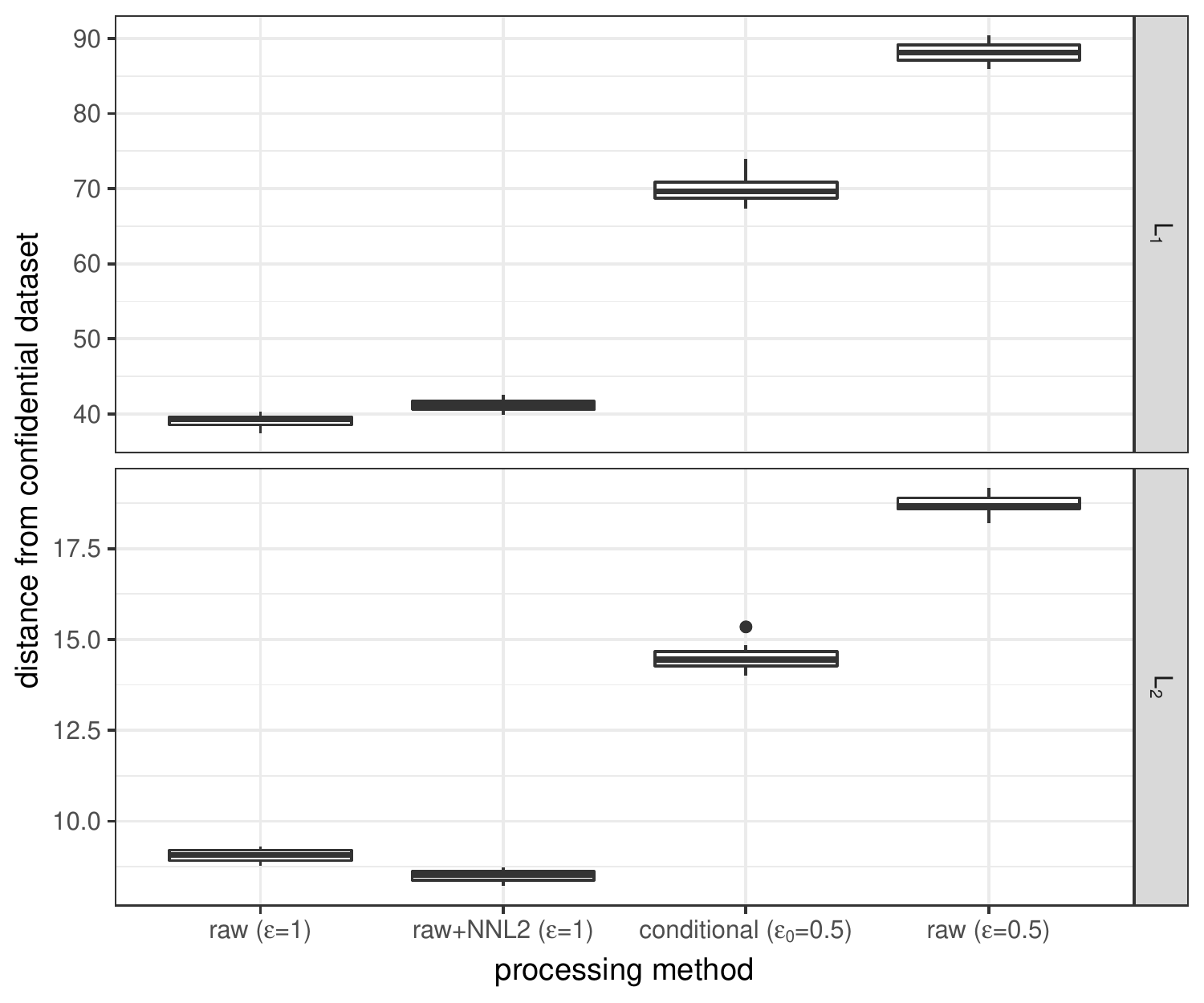}
   \caption{Average  $L_1$ (top) and $L_2$ (bottom) distances of a simulated confidential dataset from its privatized releases using four different processing methods. \label{fig:mis-variance}}
 \end{figure}
 
  We compare the outputs of our congenial mechanism with the unconstrained privacy mechanism, with and without post-processing via nonnegative $L_{2}$ minimization onto the invariant set $\InvSet$ defined by (1)-(4). A total of $20$ confidential contingency tables are simulated, each with cells following 
 \begin{equation*}
 s_{i}\overset{i.i.d.}{\sim}Negative\  Binomial\left(100,.05\right),	
 \end{equation*}
 which has mean $E(s_i) = 5.26$ and variance $Var(s_i) = 5.54$. For each confidential table, $100$ privatized releases were created using each of the following methods: 
\begin{enumerate}
\item[a)] the unconstrained (raw) Double Geometric mechanism with privacy loss budget $\epsilon = 1$ per cell; 
\item[b)] the above mechanism followed by nonnegative $L_{2}$ (NNL2) minimization onto the subspace defined by the four constraints, i.e. $\InvSet$; 
\item[c)] Our proposed $\InvSet$-conditional algorithm per Algorithm~\ref{alg:MIS}, constructed based on an unconstrained Double Geometric mechanism with $\epsilon_0 = 0.5$; and
\item[d)] the same unconstrained Double Geometric mechanism as in (a), but with $\epsilon = 0.5$ per cell. 
 \end{enumerate}
 
The $L_1$ distance and $L_2$ distance between a confidential table $s$ and $\tilde s$, a privatization of $s$, are given respectively by
 \begin{equation}\label{eq:dist}
 L_1\left(s,\tilde s \right)=\sum_{i}\left|s_{i}-{\tilde s}_{i}\right|\quad{\rm and}\quad  L_2\left(s,\tilde s\right)=\sqrt{\sum_{i}\left(s_{i}-{\tilde s}_{i}\right)^{2}}.
 \end{equation}
For each of the four types of privatization and processing mechanisms, we compute averages of both distances over 100 realizations of $\tilde s$. Figure~\ref{fig:mis-variance} displays the box plots of these average distances over the 20 simulated copies of private table $s$. 
We observe that the conditional mechanism, constructed from an unconstrained mechanism with $\epsilon_0 = 0.5$, exhibits a degree of variability between the unconstrained mechanisms with privacy loss budget $\epsilon = 0.5$ and $\epsilon = 1$. On the other hand, the nonnegative $L_2$ projection of the unconstrained mechanism with  $\epsilon = 1$ achieves a level of accuracy mostly on par with it.  Note that this observation should not be taken as a suggestion of relative accuracy between the nonnegative $L_2$ minimization and the constrained mechanism, because the effective privacy guarantee that either mechanism enjoys is undetermined, an issue we will discuss further at the end of Section~\ref{sec:postprocess}. That said, by Theorem~\ref{thm:2epsilon} that the conditional mechanism inflates the privacy loss budget of a unconstrained algorithm  by $\left(1+\gamma\right)$, the empirical observation suggests that the effective $\gamma$ may be somewhere in between $0$ and $1$. In the following sections, we will see examples where $\gamma=0$ or even $\gamma<0$. 

\section{Curator's post-processing may not be innocent processing}\label{sec:postprocess}

A common practice to ensure unconstrained differentially private releases respect the mandated disclosure requirements is through optimization-based post-processing, which takes the general form
\begin{equation}\label{eq:post-process}
\DPost \left(\DP;\InvSet\right)=\text{argmin}_{s\in \InvSet}\Delta\left(\DP,s\right).
\end{equation}
That is, $\DPost$ is the element in $\InvSet$ that is the closest to $M$ according to some discrepancy measure $\Delta$, typically a distance, such as the two given in \eqref{eq:dist}. In the case of the Census Bureau's TopDown algorithm, $\DPost$ is a composite post-processing procedure consisting of first a nonnegative $L_{2}$ minimization followed by an $L_{1}$ minimization onto the integer solutions; see \citep{abowd2019census}. 

In the literature of differential privacy, there is a widely referenced theorem which establishes that differentially privatized releases are ``immune'' to post-processing \citep{dwork2014algorithmic}. The theorem states that if $\DP$ is an $\epsilon$-differentially private mechanism and $g$ is an arbitrary function, then $g \circ \DP$ is still $\epsilon$-differentially private. Indeed for any $g$-measurable set $B$, 
\begin{equation}\label{eq:post-process-theorem}
P\left(g(\DPx)\in B\right)=P(\DPx\in g^{-1}(B)).	
\end{equation}
Thus for every $x \in \mathcal{X}$, the maximal increased risk of disclosure from 
releasing $g(M(x))$ cannot exceed that from releasing $M(x)$ (but the reversed is guaranteed only when $g$ is one-to-one).
Intuitively, further blurring an already blurred picture can only make it harder, not easier, to see what is in the original picture.

This intuition, however, is based on the assumption that the further blurring process does not use any knowledge about the original picture. We need to make clear here that imposing invariants on differentially private releases via optimization-based post-processing, in the sense of the operation discussed here, does {\it not} in general fall under the jurisdiction of the post-processing theorem. This is because $\DPost$, the function used to impose invariants on the unconstrained output $\DPx$, is implicitly dependent on the confidential dataset $x^{*}$, with the dependence induced via $\InvSet$, or equivalently $\mathcal{X}^{*}$ to which $x^{*}$ belongs. Since $\InvSet$ supplies information about the confidential database, whereas the unconstrained mechanism $\DP$ is by design not preferential towards $\InvSet$, any further processing of $\DP$ that  makes nontrivial use of $\InvSet$ risks violating the privacy guarantee that $\DP$ deserves. 

The post-processing theorem guarantees that no loss of privacy will be induced to the privatized query via any functional transformation that may be carried out by an ordinary analyst or data user. However, imposing invariants is the kind of post-processing that only the data curator -- one who has access to the confidential data -- is capable of performing. In the extreme scenario (see Example~\ref{ex:modulus}) that the invariant forces the privatized disclosure to be precisely equal to the confidential query, for the data curator to achieve this algorithmically is as simple as taking the privatized query $\DP$ and projecting it  to the single point in $\mathbb{R}^{d}$ defined by the confidential value $s\left(x^*\right)$. But this is impossible for a data user who do not know what $s\left(x^*\right)$ is. 

One may wonder the following question. While the invariant $\InvSet$ has a dependence on the confidential $x^{*}$, itself is nevertheless public information. Doesn't that make $\DPost(\cdot; \InvSet)$ a fully specified function, just like $g$ in  the post-processing theorem? The answer is no in general, and the distinction here is a subtle one. When talking about the value of the invariants, it suffices to regard $\InvSet$ merely as an announced description of the confidential data. However as alluded to previously, $\InvSet$ is a set-valued map from the database space to subsets of the query space, i.e.  $\InvSet: \mathcal{X} \to \mathscr{B} (\mathbb{R}^{d})$. Almost always is the case in practice that the functional form of map of $\InvSet$ is {\it a priori} determined, but its value -- namely $\InvSet\left(x^{*}\right)$ -- can be calculated only after the confidential data is observed. Indeed, the actual specification of $\InvSet$ would almost certainly change if $x^{*}$ were observed differently. This means for an $f$-measurable set $B$ and a database $x \in \mathcal{X}$, the equivalent events in the $f$ and the $\DP$ spaces are now
\begin{equation}\label{eq:post-process-complication}
f\left(\DPx;\InvSet\left(x\right)\right)\in B\;\Leftrightarrow\;\DPx\in f^{-1}\left(B; \InvSet(x)\right),
\end{equation}
noting that the inverse function $f^{-1}(\cdot;\InvSet(x))$ now depends on $x$.

To see the complication caused by this dependence,  write $\tilde B(x)=f^{-1}(B; {\InvSet(x)})$ and $f(x)=f(\DPx;\InvSet(x))$. We then have 
\begin{align}\label{eq:cross}
\frac{P(f(x)\in B)}{P(f(x')\in B)}  = \frac{P(M(x)\in \tilde B(x))}{P(M(x')\in \tilde B(x'))}.
\end{align}	
Although both $\tilde B(x)$ and $B(x')$ are measurable sets, they are not necessarily the same  when $x'\neq x$. Hence we cannot use \eqref{eq:dp} to conclude that the right hand side of \eqref{eq:cross} is bounded above by $\exp\left(\epsilon\right)$. This does not necessarily imply that the post-processing $f$ as defined in \eqref{eq:post-process} is not $\epsilon$-differentially private. Indeed, we prove in Appendix~A that  both $L_2$ and (a class of) $L_1$ post-processing in Example~\ref{ex:bin} below is $\epsilon$-differentially private.  But it does imply that in general, the statistical and privacy properties of $f$ are not straightforwardly inherited from that of $M$, and hence they need to be established on a case by case basis.

Another major drawback of using optimization-based methods to impose invariants is that the statistical intelligibility of differential privacy is obscured. The post-processing function $f$ is often procedurally defined, hence a complex and confidential data-dependent map from the unconstrained query space to the constrained query space, with almost impenetrable statistical properties, and certainly so for any given database. In contrast, using conditioning to realize differential privacy with mandated disclosure, despite often computationally demanding by construction, preserves the statistical intelligibility of the privacy mechanism. The constrained privacy mechanism is distributionally -- as opposed to procedurally -- constructed, preserving the possibility of transparent downstream analysis. It furthermore delivers privacy guarantee in the same format as does differential privacy without constraints, offering a congenial statistical interpretation that resembles the original.

Below we use an example to compare congenial privacy with two approaches of post-processing for a same query function. The example is simple enough for analytical derivations of the distributions of post-processing mechanisms to be possible. As we will see, the three approaches to impose invariant constraints yield distinct theoretical behaviors.

\begin{example}[A two-bin histogram with constrained total]\label{ex:bin}

Suppose the confidential database $x$ is a binary vector, and the query of interest tabulates the number of $0$ and $1$ entries in $x$, i.e. 
\[
s\left(x\right) = \left(s_{1}\left(x\right),s_{2}\left(x\right)\right) = \left(\sum_i{\bf 1}\left(x_{i}=0\right),\sum_i{\bf 1}\left(x_{i}=1\right)\right).
\]
Employ the Laplace mechanism as the unconstrained privatization mechanism to protect the two-bin histogram, i.e. 
\[
\DP\left(x\right) = \left(m_{1} = s_{1}+u_{1}, m_{2} = s_{2}+u_{2}\right), \; u_{i} \overset{i.i.d.}{\sim} Lap(2\epsilon^{-1}),\] 
expending in total $\epsilon$ privacy loss budget.   The induced probability density of $M$ is
\[
p\left(m_{1},m_{2}\right)=\left(\frac{\epsilon}{4}\right)^{2}\exp\left\{ -\frac{\epsilon}{2}\left(\left|m_{1}-s_{1}\right|+\left|m_{2}-s_{2}\right|\right)\right\} .
\]

Suppose the invariant to be imposed is that the total of the privatized histogram shall be the same as that the the confidential query itself. That is, for any given $x$, the associated invariant set is 
\begin{equation}\label{eq:2bin-invset}
\InvSet\left(x\right) =\left\{ \left(a_{1},a_{2}\right)\in\mathbb{R}^{2}:a_{1}+a_{2}= s_1\left(x\right) + s_2\left(x\right)\right\}.
\end{equation}
In the calculations below, a certain database $x$ is fixed, and we write the invariant total $n = \left\Vert x\right\Vert$, the length of $x$.
\end{example}

\paragraph{Congenial privacy.} Under the constraint of histogram total, our congenial $\DPCo\left(x\right)$ is obtained from the conditional distribution
\[
\left(s_{1}+u_{1},s_{2}+u_{2}\right)\mid u_{1}+u_{2}=0.
\]
It turns out that the probability density of $\DPCo$ is
\begin{equation}\label{eq:laplace-0sum}
P\left(m_1  =  m, m_2 = n-m\right)=\frac{\epsilon}{2}\exp\left\{ -\epsilon\left|m-s_{1}\right|\right\}
\end{equation}
and $0$ otherwise; see Appendix~\ref{appendix:laplace}. That is, our congenial mechanism $M^*$ is simply to draw a $u$ from $Lap(\epsilon^{-1})$, and then release $(m_1=s_1+u,  m_2=s_2-u)$. Clearly, the privacy property of $M^*$ is the same as its first component, call it $M^*_1$, which protects $s_1$ by releasing $m_1$ because setting $m_2=n-m_1$ is a deterministic step with no implication on privacy when $n$ is known. But $M^*_1$ is simply the Laplace mechanism with $\epsilon$ budget. Consequently, our congenial mechanism maintains the same $\epsilon$ guarantee as the original unconstrained mechanism, even though the meaning of protection is different, as we emphasized in Section~\ref{subsec:cond}. 

\paragraph{Post-processing with $L_{2}$ minimization.}  Here we minimize the $L_{2}$ distance between $\DPx$ and the post-processed histogram release, denoted $\DPostii$, subject to its sum being $n$. The solution is
\[
\DPostii\left(\DPx, \InvSet\right)=\text{argmin}_{s\in \InvSet}\left\Vert \DPx-s\right\Vert _{2}{=}\left(\bar{x}+\tilde{u},\bar{x}-\tilde{u}\right),
\]
where $\tilde{u}$ is the average of two independent Laplace random variables with scale $2\epsilon^{-1}$. As can be easily seen  (for example from its characteristic function), $\tilde{u}$ is not a Laplace random variable, but in fact follows distribution
\begin{equation}\label{eq:average-laplace}
\frac{1}{2}Lap(\epsilon^{-1})+\frac{1}{2}SgnGamma(2,\epsilon^{-1}),	
\end{equation}
that is a 50-50 mixture of a Laplace distribution with scale $\epsilon^{-1}$ and a signed Gamma distribution (i.e. a regular Gamma distribution multiplied by a fair random sign) with shape $k=2$ and scale $\epsilon^{-1}$; see Appendix~\ref{appendix:laplace} for derivation. It is worth noting that since a signed Gamma distribution of shape $k = 2$ can be written as the sum of two independent Laplace distributions of the same scale, $\tilde{u}$ is more variable than a single independent Laplace random variable of the same scale. Hence for any $x$, the privatized release using $\DPostii$ will be more variable than that of the congenial privatization $\DPCo$. Intuitively, this suggests that $\DPostii$ should not do worse than $\DPCo$ in terms of privacy protection. Indeed as we will show in Appendix~\ref{appendix:laplace}, $\DPostii$ also achieves the same $\epsilon$-differentially private guarantee.

\begin{table}
\begin{tabular}{rll}
\toprule 
 & $E\left(m_1, m_2\right)$ & $Var\left(m_{1}\right)$\tabularnewline
\midrule
$M^{*}$ & $\left(s_{1}\left(x\right),s_{2}\left(x\right)\right)^{\top}$ & $\frac{2}{\epsilon^{2}}$\tabularnewline
\midrule 
$f_{L_{2}}$ & $\left(\bar{s}\left(x\right),\bar{s}\left(x\right)\right)^{\top}$ & $\frac{4}{\epsilon^{2}}$\tabularnewline
\midrule 
$f_{L_{1}}$ & $\left(s_{1}\left(x\right),s_{2}\left(x\right)\right)^{\top}$ & $ \left[\frac{4}{\epsilon^{2}},\frac{8}{\epsilon^{2}}\right]$\tabularnewline
\bottomrule 
\end{tabular}

\bigskip
\caption{Differentially private two-bin histogram with invariant total: expectation and first-component variance of the conditional ($\DPCo$) and post-processed  ($f_{L_2}$ and $f_{L_1}$) histograms. \label{tab:2bin-comparison}
}	
\end{table}

\paragraph{Post-processing with $L_{1}$ minimization.}
If we change the $L_2$ norm to $L_{1}$ norm in the above, the privatization
mechanism is no longer unique.  There will be infinitely many solutions in the form of
\[
\DPosti\left(\DPx, \InvSet\right)=\text{argmin}_{s\in\InvSet}\left\Vert \DPx-s\right\Vert _{1}:=\left(\tilde{s},n-\tilde{s}\right),
\]
where $\tilde{s}$ only needs to satisfy
\begin{align*}
\tilde{s} & \in\left[\min\left\{s_{1}+u_{1},n-\left(s_{2}+u_{2}\right)\right\},\max\left\{s_{1}+u_{1},n-\left(s_{2}+u_{2}\right)\right\}\right] \\ &\overset{L}{=}\left[s_{1}+\min\left(u_{1},u_{2}\right),s_{1}+\max\left(u_{1},u_{2}\right)\right],
\end{align*}
where $\min\left(u_{1},u_{2}\right)$ and $\max\left(u_{1},u_{2}\right)$ are the minimum and the maximum of two i.i.d. Laplace random variables. In particular, choosing any convex combination of $u_1$ and $u_2$ as the additive noise term to the first entry constitutes a solution, i.e., $\tilde{s}_1=s_{1}+\beta u_{1} + (1-\beta) u_{2}$ for some $\beta \in [0,1]$, and then set $\tilde{s}_2=n-\tilde{s}_1$.  For the rest of the article,  $L_1$ post-processing refers to this convex combination strategy. 

For ease of reference, Table~\ref{tab:2bin-comparison} collects a comparison of the constrained differentially private histogram $\DPCo$ and two the post-processing approaches, $f_{L_2}$ and $f_{L_1}$, in terms of the expectation and variance of the resulting release for a given database $x \in \mathcal{X}$ and confidential query $s$. All expectations are taken with respect to the relevant mechanism. 

We can see that our congenial mechanism has the smallest variance. Because the congenial mechanism and $f_{L_2}$ both carry the same $\epsilon$-privacy guarantee which \textit{cannot} be further improved, we can comfortably declare that $f_{L_2}$ is \textit{inadmissible} because it is dominated by the congenial mechanism,  providing less utility (in terms of statistical precision) without the benefit of increased privacy protection. However, we cannot say that the congenial mechanism dominates $f_{L_1}$ even though it still leads to smaller variance. This is because, as we will prove in Appendix~\ref{appendix:laplace}, the attained  level of privacy guarantee of $f_{L_1}$ is $\epsilon/(2\max\{\beta, 1-\beta\})$, which is never worse than $\epsilon$.  Hence the increased variance under $f_{L_1}$ might be acceptable as a price for gaining more privacy protection. In general, comparing the utility of two privatization mechanisms with the same nominal but different actual privacy loss budget is as thorny an issue as comparing the statistical power of two testing procedures with the same nominal, but different actual, Type I error rates.

\section{Discussion}\label{sec:discussion}

\subsection{Finding better $\gamma$}\label{subset:gamma}

While Theorem~\ref{thm:2epsilon} always holds with $\gamma=1$, it likely sets a  loose bound on the ratio between $P\left(\DPCo\left(x\right)\in B\right)$ and $P\left(\DPCo\left(x'\right)\in B\right)$, hence declaring an overly ``conservative'' nominal level of privacy loss induced by $\DPCo$. Depending on how the invariant $\InvSet$ interacts with the distributional property of the unconstrained mechanism $M$ in a specific instance, $\gamma$ can be shown to take smaller values, adding more ``bang of the buck'' to the privacy loss budget, so to speak.  Three examples are given below.

\begin{example}[trivial invariants]\label{ex:trivial}
Consider the trivial case where the set of invariants does not actually impose any restriction, i.e., $\mathcal{X}^* = \mathcal{X}$. It is then necessarily true that $\InvSet=\mathcal{S}$, and the ``constrained'' differentially private mechanism is identical in distribution to the unconstrained one: $M^{*} \overset{L}{=} M$. In this case, $\gamma = 0$ and $M^{*}$ is $\epsilon$-differentially private.
\end{example}


\begin{example}[rounding and secrecy]\label{ex:modulus}	
Let $x$ be a binary vector of length $n$ indicating a group of individuals' possession of a certain feature (yes 1/no 0), and the query of interest is $s(x) = \lceil \sum x_i/10 \rceil$, or the number of groups of size $10$ that can be formed by people who possesses the feature. A Double Geometric mechanism $\DP(x) = s(x) + U$ is used to protect the query, with a privacy loss budget of $\epsilon$ (under the global sensitivity of $\nabla(s) = 1$). 

Suppose the following invariant set is mandated for disclosure:
\begin{equation*}
\mathcal{X}^{*}=\left\{ \left(x_{1},\ldots,x_{n}\right)\in\left\{ 0,1\right\} ^{n}:\sum x_{i}\in\left[41,50\right]\right\},	
\end{equation*}
or equivalently, $
\InvSet=\left\{ 5\right\}$
is the singleton set that contains nothing but the true value $s(x^*) = 5$. In this case, the implied constrained privacy mechanism $\DPCo$ is equivalent to a degenerate distribution: $P\left(M^{*}\left(x\right)=5\right)=1$ for all $x \in \mathcal{X}^{*}$. Furthermore, for all neighboring datasets $\left(x,x'\right) \in \mathcal{X}^{*} \times \mathcal{X}^{*}$, and any $B$ a measurable subset of $\mathbb{N}$,  
\[
P\left(\DPCo\left(x\right)\in B\right)=\exp\left(0\right) P\left(\DPCo\left(x'\right)\in B\right)=\begin{cases}
\begin{array}{c}
1\\
0
\end{array} & \begin{array}{c}
\text{if }5 \in B\\
\text{otherwise}.
\end{array}\end{cases}
\]
Therefore in this particular instance, $\DPCo$ is in fact $0$-differentially private, corresponding to $\gamma = -1$ in Theorem~\ref{thm:2epsilon}. This means that for  those databases conformal to the invariant $\mathcal{X}^{*}$, $\DPCo$ supplies no discriminatory information among them whatsoever.  Indeed, if the value of the supposedly private query is already public knowledge, no mechanism can further increase its disclosure risk, therefore achieving complete differential privacy.
\end{example}
 

Our third example is a less trivial example of $\gamma<0$, which is actually provided by the congenial mechanism in Example~\ref{ex:bin}. There, although the guaranteed privacy loss budget is still $\epsilon$, in applying Theorem~\ref{thm:2epsilon}, $k$ must be set to $2$ or greater, because under the constraint of fixed sum, the nearest neighbors among binary vectors $(x, x')$ must have $\mathtt{d}(x, x')=2$.  Hence the $\epsilon$ privacy bound implies $k(1+\gamma)=2(1+\gamma)=1$, yielding $\gamma=-0.5$.

This example also reminds us that a major cause of information leakage due to invariants is the structural erosion to the underlying data space, such as making $\mathtt{d}(x, x') = 1$ (as measured on the original space $\mathcal{X}$) impossible.  In reality, the unconstrained data space $\mathcal{X}$ is typically regular, and contains $\mathcal{X}^*$ as a proper subset. We should expect to find many $x \in  \mathcal{X}^*$, and many (if not many more) $x' \in \mathcal{X} \backslash \mathcal{X}^*$ such that $x$ and $x'$ are  neighbors, near or far. Knowing that the confidential dataset must belong to $\mathcal{X}^*$ categorically rules out the possibility that all the $x'$'s can be the confidential dataset, weakening the differential privacy promise by eliminating the neighbors. If the invariant is sufficiently restrictive such that $\mathcal{X}^*$ becomes topologically small relative to $\mathcal{X}$, it may be the case that for some $x \in \mathcal{X}^*$, all of its original immediate neighboring datasets are not in $\mathcal{X}^*$:
\begin{equation*}
\left\{ \left(x,x'\right)\in\mathcal{X}^{*}\times\mathcal{X}^{*}:\mathtt{d}\left(x,x'\right) = 1 \right\}  = \emptyset,
\end{equation*}
in which case we say that the neighboring structure of the original data space of the database is {\it substantially disrupted},  as seen in Example~\ref{ex:bin}. If the disruption is so substantial that neighbors of any distance cease to exist, we say that the neighborhood structure is completely {\it destroyed}:
\begin{equation*}
\left\{ \left(x,x'\right)\in\mathcal{X}^{*}\times\mathcal{X}^{*}:\mathtt{d}\left(x,x'\right) \ge 1 \right\}  = \emptyset.	
\end{equation*}
Then, even for the constrained mechanism $\DPCo$, the $\epsilon$-differential privacy promise becomes vacuously true, since no possible neighboring pairs remain for which the concept of privacy is applicable. However, Example~\ref{ex:modulus} demonstrates that vacuous privacy promise can occur without the neighborhood structures completely destroyed. 

In general, it is conceptually difficult to parse out the share of responsibility on privacy attributable to the data curator under any scenario of mandated disclosure. If certain information is made public, then any information that it logically implies cannot be expected to be protected, either. The best that we can expect any privacy mechanism to deliver, then, is protection over information that truly remains. Notions that serve the equivalent purposes as $\mathcal{X}^{*}$ and $\InvSet$ have been proposed in the literature for expositions of new notions of differential privacy, including {\it blowfish} and {\it pufferfish} privacy \citep{he2014blowfish,kifer2012rigorous}, where the privacy guarantee is re-conceptualized on the restricted space {\it modulo} any structural erosion to the original sample space due to external or auxiliary information available to an attacker. When interpreting the promise of Theorem~\ref{thm:2epsilon}, we shall pay due diligence to the case in which immediate neighbors no longer exists, and talk about the $\epsilon$-differential privacy guarantee only for those $k$-neighbors that actually do.




\subsection{Other interpretations of privacy}

The literature has seen other lines of work that offer interpretations of differential privacy in statistical terms. Notably, the {\it posterior-to-posterior semantics} of differential privacy \citep{dinur2003revealing,kasiviswanathan2014semantics,ashmead2019effective} explains the effect of privacy also in the vocabulary of Bayesian posteriors. The posterior-to-posterior semantics establishes differential privacy as a bound for the ratio of posterior probabilities assigned to an individual confidential data entry, when the private mechanism is applied to neighboring datasets that differ in only one entry. The said ratio is between the two quantities 
\begin{equation}\label{eq:pair1}
\pi \left(x_{i}^{*}=\omega\mid M_{\epsilon}\left(x\right)\in B\right) \; \text{and}\; \pi \left(x_{i}^{*}=\omega\mid M_{\epsilon}\left(x'\right)\in B\right),	
\end{equation}
where $x$ and $x'$ are neighboring datasets. What varies between the two posterior quantities is the confidential dataset on which the private query is applied. The datasets $x$ and $x'$ are neighboring datasets, one of which presumably (but not necessarily) contains the true value of the $i$th confidential data entry $x^*_{i}$, and the other contains a fabricated value of it. 

The comparison in \eqref{eq:pair1} raises the question of what it means by the conditional probability of $x^*_{i}$ given a private query constructed from a database that does {\it not} contain this true value, as this conditional probability hinges on  external knowledge about how a fabricated database may inform the actual confidential database.  Our \textit{prior-to-posterior semantics} formulated in Theorem~\ref{thm:posterior} takes a practical point of view and avoids such conceptual complication. We compare the disclosure risk before and after an \textit{actual} release, reflecting the core idea behind differential privacy.

\subsection{Full privacy or vacuous knowledge}

As alluded to in Section~\ref{subsec:intelligibility}, the notion of vacuous knowledge cannot be appropriately captured by ordinary probabilities. The defect reflects a fundamental inability of the language of probability in expressing a true lack of knowledge, a central struggle in the Bayesian literature that motivated endeavors in search for the so-called ``objective priors''  \citep{ghosh2011objective}. Neither the uniform distribution nor any other reference distributions are truly information-free, as they all invariably invoke some principle of indifference in relation to a specific criterion (such as the Lebesgue measure, the counting measure, or the likelihood function) which is subject to debate. 
 
To supply the rigorous definition needed to define probabilistically  
$M_0(x)$ in Section~\ref{subsec:intelligibility}, we invoke the concept of lower probability functions, and as a special case belief functions \citep[see e.g.][]{gong2019simultaneous}, both generalized versions of a probability function which amounts to a set of probability distributions on a given space. The statistical information contained in $M_{0}$ is represented by the vacuous lower probability function, denoted $\underline{P}$, which takes the value $\underline{P}(B) = 1$ only when $B = \mathbb{R}^{d}$, and $0$ everywhere else. Equivalently stated in terms of its conjugate upper probability function $\overline{P}(B) = 1 - \underline{P}(B^c)$,
\begin{equation}
\overline{P}\left(B\right)=\begin{cases}
\begin{array}{c}
1\\
0
\end{array} & \begin{array}{c}
\text{if }B\in\mathscr{B}\left(\mathbb{R}^{d}\right)\backslash\left\{ \emptyset\right\} ,\\
\text{if }B=\emptyset.
\end{array}\end{cases}
\end{equation}
That is, the statistical information contained in $M_{0}$ can be (but is not known to be) concordant with {\it any} Borel-measurable probability function, thus the probability of any $B$ is as low as $0$ and as high as $1$, as long as $B$ is neither the full set nor the empty set.

Generally, the conditioning operation involving lower probability functions is not trivial, and it is not unique due to the existence of several applicable rules. But if the lower probability functions being conditioned on is vacuous, there is consensus among different rules as to what posterior distribution should result, namely precisely as stated in \eqref{eq:s0x}. See \cite{gong2017judicious} for an extended exposition of conditioning rules involving lower probability and belief functions.
 
\subsection{Future directions}

This work points to several future directions of pursuit. On the computational front, how do we construct efficient algorithms to realize congenial privacy, by drawing possibly high-dimensional releases subject to complex constraints? When we use non-perfect Markov chain Monte Carlo to accomplish this task, how do we ensure the declared privacy guarantee is not destroyed because a chain cannot run indefinitely?  On the privacy front, for every constrained mechanism constructed through conditioning, how to find the best $\gamma$ value that tracks as closely as possible the effective privacy loss budget, which in turn enables fair performance comparisons among invariant-respecting algorithms? Furthermore, how to achieve an orthogonal decomposition of the public, invariant information from the free, residual information that remains in the confidential microdata, in a logical sense without having to resort to the probabilistic vocabulary of statistical independence?

\begin{acks}
The authors thank Jeremy Seeman, Salil Vadhan, Guanyang Wang and three anonymous reviewers for  helpful suggestions. Ruobin Gong gratefully acknowledges research support by the National Science Foundation (NSF) DMS-1916002. Xiao-Li Meng also thanks NSF for partial financial support while completing this article. 
\end{acks}


\bibliographystyle{ACM-Reference-Format}
\bibliography{master}


\begin{thebibliography}{22}


\ifx \showCODEN    \undefined \def \showCODEN     #1{\unskip}     \fi
\ifx \showDOI      \undefined \def \showDOI       #1{#1}\fi
\ifx \showISBNx    \undefined \def \showISBNx     #1{\unskip}     \fi
\ifx \showISBNxiii \undefined \def \showISBNxiii  #1{\unskip}     \fi
\ifx \showISSN     \undefined \def \showISSN      #1{\unskip}     \fi
\ifx \showLCCN     \undefined \def \showLCCN      #1{\unskip}     \fi
\ifx \shownote     \undefined \def \shownote      #1{#1}          \fi
\ifx \showarticletitle \undefined \def \showarticletitle #1{#1}   \fi
\ifx \showURL      \undefined \def \showURL       {\relax}        \fi
\providecommand\bibfield[2]{#2}
\providecommand\bibinfo[2]{#2}
\providecommand\natexlab[1]{#1}
\providecommand\showeprint[2][]{arXiv:#2}

\bibitem[\protect\citeauthoryear{Abowd, Ashmead, Simson, Kifer, Leclerc,
  Machanavajjhala, and Sexton}{Abowd et~al\mbox{.}}{2019}]%
        {abowd2019census}
\bibfield{author}{\bibinfo{person}{John Abowd}, \bibinfo{person}{Robert
  Ashmead}, \bibinfo{person}{Garfinkel Simson}, \bibinfo{person}{Daniel Kifer},
  \bibinfo{person}{Philip Leclerc}, \bibinfo{person}{Ashwin Machanavajjhala},
  {and} \bibinfo{person}{William Sexton}.} \bibinfo{year}{2019}\natexlab{}.
\newblock \bibinfo{booktitle}{\emph{Census TopDown: Differentially Private
  Data, Incremental Schemas, and Consistency with Public Knowledge}}.
\newblock \bibinfo{type}{{T}echnical {R}eport}. \bibinfo{institution}{US Census
  Bureau}.
\newblock


\bibitem[\protect\citeauthoryear{Ashmead, Kifer, Leclerc, Machanavajjhala, and
  Sexton}{Ashmead et~al\mbox{.}}{2019}]%
        {ashmead2019effective}
\bibfield{author}{\bibinfo{person}{Robert Ashmead}, \bibinfo{person}{Daniel
  Kifer}, \bibinfo{person}{Philip Leclerc}, \bibinfo{person}{Ashwin
  Machanavajjhala}, {and} \bibinfo{person}{William Sexton}.}
  \bibinfo{year}{2019}\natexlab{}.
\newblock \bibinfo{booktitle}{\emph{Effective Privacy After Adjusting for
  Invariants with Applications to the 2020 Census}}.
\newblock \bibinfo{type}{{T}echnical {R}eport}. \bibinfo{institution}{US Census
  Bureau}.
\newblock


\bibitem[\protect\citeauthoryear{Blocker and Meng}{Blocker and Meng}{2013}]%
        {blocker2013potential}
\bibfield{author}{\bibinfo{person}{Alexander~W Blocker} {and}
  \bibinfo{person}{Xiao-Li Meng}.} \bibinfo{year}{2013}\natexlab{}.
\newblock \showarticletitle{The potential and perils of preprocessing: Building
  new foundations}.
\newblock \bibinfo{journal}{\emph{Bernoulli}} \bibinfo{volume}{19},
  \bibinfo{number}{4} (\bibinfo{year}{2013}), \bibinfo{pages}{1176--1211}.
\newblock


\bibitem[\protect\citeauthoryear{Dinur and Nissim}{Dinur and Nissim}{2003}]%
        {dinur2003revealing}
\bibfield{author}{\bibinfo{person}{Irit Dinur} {and} \bibinfo{person}{Kobbi
  Nissim}.} \bibinfo{year}{2003}\natexlab{}.
\newblock \showarticletitle{Revealing information while preserving privacy}. In
  \bibinfo{booktitle}{\emph{Proceedings of the twenty-second ACM
  SIGMOD-SIGACT-SIGART symposium on Principles of database systems}}.
  \bibinfo{pages}{202--210}.
\newblock


\bibitem[\protect\citeauthoryear{Dwork, McSherry, Nissim, and Smith}{Dwork
  et~al\mbox{.}}{2006}]%
        {dwork2006calibrating}
\bibfield{author}{\bibinfo{person}{Cynthia Dwork}, \bibinfo{person}{Frank
  McSherry}, \bibinfo{person}{Kobbi Nissim}, {and} \bibinfo{person}{Adam
  Smith}.} \bibinfo{year}{2006}\natexlab{}.
\newblock \showarticletitle{Calibrating noise to sensitivity in private data
  analysis}. In \bibinfo{booktitle}{\emph{Theory of cryptography conference}}.
  Springer, \bibinfo{pages}{265--284}.
\newblock


\bibitem[\protect\citeauthoryear{Dwork and Roth}{Dwork and Roth}{2014}]%
        {dwork2014algorithmic}
\bibfield{author}{\bibinfo{person}{Cynthia Dwork} {and} \bibinfo{person}{Aaron
  Roth}.} \bibinfo{year}{2014}\natexlab{}.
\newblock \showarticletitle{The algorithmic foundations of differential
  privacy}.
\newblock \bibinfo{journal}{\emph{Foundations and Trends{\textregistered} in
  Theoretical Computer Science}} \bibinfo{volume}{9}, \bibinfo{number}{3--4}
  (\bibinfo{year}{2014}), \bibinfo{pages}{211--407}.
\newblock


\bibitem[\protect\citeauthoryear{Fay}{Fay}{1992}]%
        {fay1992inferences}
\bibfield{author}{\bibinfo{person}{Robert~E Fay}.}
  \bibinfo{year}{1992}\natexlab{}.
\newblock \bibinfo{booktitle}{\emph{When Are Inferences from Multiple
  Imputation Valid?}}
\newblock \bibinfo{publisher}{US Census Bureau}.
\newblock


\bibitem[\protect\citeauthoryear{Fioretto and Van~Hentenryck}{Fioretto and
  Van~Hentenryck}{2019}]%
        {fioretto2019differential}
\bibfield{author}{\bibinfo{person}{Ferdinando Fioretto} {and}
  \bibinfo{person}{Pascal Van~Hentenryck}.} \bibinfo{year}{2019}\natexlab{}.
\newblock \showarticletitle{Differential privacy of hierarchical census data:
  An optimization approach}. In \bibinfo{booktitle}{\emph{International
  Conference on Principles and Practice of Constraint Programming}}. Springer,
  \bibinfo{pages}{639--655}.
\newblock


\bibitem[\protect\citeauthoryear{Ghosh}{Ghosh}{2011}]%
        {ghosh2011objective}
\bibfield{author}{\bibinfo{person}{Malay Ghosh}.}
  \bibinfo{year}{2011}\natexlab{}.
\newblock \showarticletitle{Objective priors: An introduction for
  frequentists}.
\newblock \bibinfo{journal}{\emph{Statist. Sci.}} \bibinfo{volume}{26},
  \bibinfo{number}{2} (\bibinfo{year}{2011}), \bibinfo{pages}{187--202}.
\newblock


\bibitem[\protect\citeauthoryear{Gong}{Gong}{2019a}]%
        {gong2019exact}
\bibfield{author}{\bibinfo{person}{Ruobin Gong}.}
  \bibinfo{year}{2019}\natexlab{a}.
\newblock \showarticletitle{Exact inference with approximate computation for
  differentially private data via perturbations}.
\newblock \bibinfo{journal}{\emph{arXiv preprint arXiv:1909.12237}}
  (\bibinfo{year}{2019}).
\newblock


\bibitem[\protect\citeauthoryear{Gong}{Gong}{2019b}]%
        {gong2019simultaneous}
\bibfield{author}{\bibinfo{person}{Ruobin Gong}.}
  \bibinfo{year}{2019}\natexlab{b}.
\newblock \showarticletitle{Simultaneous Inference under the Vacuous
  Orientation Assumption}.
\newblock \bibinfo{journal}{\emph{Proceedings of Machine Learning Research}}
  \bibinfo{volume}{103} (\bibinfo{year}{2019}), \bibinfo{pages}{225--234}.
\newblock


\bibitem[\protect\citeauthoryear{Gong}{Gong}{2020}]%
        {gong2020transparent}
\bibfield{author}{\bibinfo{person}{Ruobin Gong}.}
  \bibinfo{year}{2020}\natexlab{}.
\newblock \showarticletitle{Transparent Privacy is Principled Privacy}.
\newblock \bibinfo{journal}{\emph{arXiv preprint arXiv:2006.08522}}
  (\bibinfo{year}{2020}).
\newblock


\bibitem[\protect\citeauthoryear{Gong and Meng}{Gong and Meng}{2020}]%
        {gong2017judicious}
\bibfield{author}{\bibinfo{person}{Ruobin Gong} {and} \bibinfo{person}{Xiao-Li
  Meng}.} \bibinfo{year}{2020}\natexlab{}.
\newblock \showarticletitle{Judicious judgment meets unsettling updating:
  dilation, sure loss, and {Simpson}'s paradox}.
\newblock \bibinfo{journal}{\emph{Statist. Sci.}} (\bibinfo{year}{2020}).
\newblock


\bibitem[\protect\citeauthoryear{He, Machanavajjhala, and Ding}{He
  et~al\mbox{.}}{2014}]%
        {he2014blowfish}
\bibfield{author}{\bibinfo{person}{Xi He}, \bibinfo{person}{Ashwin
  Machanavajjhala}, {and} \bibinfo{person}{Bolin Ding}.}
  \bibinfo{year}{2014}\natexlab{}.
\newblock \showarticletitle{Blowfish privacy: Tuning privacy-utility trade-offs
  using policies}. In \bibinfo{booktitle}{\emph{Proceedings of the 2014 ACM
  SIGMOD international conference on Management of data}}.
  \bibinfo{pages}{1447--1458}.
\newblock


\bibitem[\protect\citeauthoryear{Kasiviswanathan and Smith}{Kasiviswanathan and
  Smith}{2014}]%
        {kasiviswanathan2014semantics}
\bibfield{author}{\bibinfo{person}{Shiva~P Kasiviswanathan} {and}
  \bibinfo{person}{Adam Smith}.} \bibinfo{year}{2014}\natexlab{}.
\newblock \showarticletitle{On the'semantics' of differential privacy: A
  bayesian formulation}.
\newblock \bibinfo{journal}{\emph{Journal of Privacy and Confidentiality}}
  \bibinfo{volume}{6}, \bibinfo{number}{1} (\bibinfo{year}{2014}).
\newblock


\bibitem[\protect\citeauthoryear{Kifer and Machanavajjhala}{Kifer and
  Machanavajjhala}{2012}]%
        {kifer2012rigorous}
\bibfield{author}{\bibinfo{person}{Daniel Kifer} {and} \bibinfo{person}{Ashwin
  Machanavajjhala}.} \bibinfo{year}{2012}\natexlab{}.
\newblock \showarticletitle{A rigorous and customizable framework for privacy}.
  In \bibinfo{booktitle}{\emph{Proceedings of the 31st ACM SIGMOD-SIGACT-SIGAI
  symposium on Principles of Database Systems}}. \bibinfo{pages}{77--88}.
\newblock


\bibitem[\protect\citeauthoryear{Kott}{Kott}{1995}]%
        {kott1995paradox}
\bibfield{author}{\bibinfo{person}{PS Kott}.} \bibinfo{year}{1995}\natexlab{}.
\newblock \showarticletitle{A paradox of multiple imputation}. In
  \bibinfo{booktitle}{\emph{Proceedings of the Section on Survey Research
  Methods}}. American Statistical Association, \bibinfo{pages}{384--389}.
\newblock


\bibitem[\protect\citeauthoryear{Liu}{Liu}{1996}]%
        {liu1996metropolized}
\bibfield{author}{\bibinfo{person}{Jun~S Liu}.}
  \bibinfo{year}{1996}\natexlab{}.
\newblock \showarticletitle{Metropolized independent sampling with comparisons
  to rejection sampling and importance sampling}.
\newblock \bibinfo{journal}{\emph{Statistics and computing}}
  \bibinfo{volume}{6}, \bibinfo{number}{2} (\bibinfo{year}{1996}),
  \bibinfo{pages}{113--119}.
\newblock


\bibitem[\protect\citeauthoryear{Meng}{Meng}{1994}]%
        {meng1994multiple}
\bibfield{author}{\bibinfo{person}{Xiao-Li Meng}.}
  \bibinfo{year}{1994}\natexlab{}.
\newblock \showarticletitle{Multiple-imputation inferences with uncongenial
  sources of input}.
\newblock \bibinfo{journal}{\emph{Statist. Sci.}} (\bibinfo{year}{1994}),
  \bibinfo{pages}{538--558}.
\newblock


\bibitem[\protect\citeauthoryear{Rubin}{Rubin}{2004}]%
        {rubin2004multiple}
\bibfield{author}{\bibinfo{person}{Donald~B Rubin}.}
  \bibinfo{year}{2004}\natexlab{}.
\newblock \bibinfo{booktitle}{\emph{Multiple imputation for nonresponse in
  surveys}}.
\newblock \bibinfo{publisher}{John Wiley \& Sons}.
\newblock


\bibitem[\protect\citeauthoryear{{US Census Bureau}}{{US Census
  Bureau}}{2020}]%
        {census20202010}
\bibfield{author}{\bibinfo{person}{{US Census Bureau}}.}
  \bibinfo{year}{2020}\natexlab{}.
\newblock \bibinfo{title}{2010 Demonstration Data Products}.
\newblock
\newblock
\newblock
\shownote{\url{https://www.census.gov/programs-surveys/decennial-census/2020-census/planning-management/2020-census-data-products/2010-demonstration-data-products.html}
  [Accessed: 04-08-2020].}


\bibitem[\protect\citeauthoryear{Xie and Meng}{Xie and Meng}{2017}]%
        {xie2017dissecting}
\bibfield{author}{\bibinfo{person}{Xianchao Xie} {and} \bibinfo{person}{Xiao-Li
  Meng}.} \bibinfo{year}{2017}\natexlab{}.
\newblock \showarticletitle{Dissecting Multiple Imputation from a Multi-Phase
  Inference Perspective: What Happens When God's, Imputer's and Analyst's
  Models Are Uncongenial?}
\newblock \bibinfo{journal}{\emph{Statistica Sinica}} (\bibinfo{year}{2017}),
  \bibinfo{pages}{1485--1545}.
\newblock


\end{thebibliography}

\newpage
\appendix

\section{Privacy Guarantees for Congenial, $L_1$ and $L_2$ methods}\label{appendix:laplace}

We first derive the conditional distribution of two i.i.d. Laplace random variables, given that their sum is zero.  Let $u_{1},u_{2}\overset{i.i.d.}{\sim}Lap\left(2\epsilon^{-1}\right)$
and denote $v=u_{1}$, $w=u_{1}+u_{2}$. Since $\left(v,w\right)$
is linear in $\left(u_{1},u_{2}\right)$, their joint probability density function is given by.
\begin{eqnarray*}
p\left(v,w\right) & \propto & p\left(u_{1}\left(v,w\right),u_{2}\left(v,w\right)\right)\\
 & \propto & \exp\left(-0.5\epsilon\left|v\right|-0.5\epsilon\left|w-v\right|\right),
\end{eqnarray*}
This implies that 
\begin{eqnarray*}
p\left(v\mid w=0\right)& \propto & p\left(v,w=0\right) \\&\propto&\exp\left(-\epsilon\left|v\right|\right)\sim Lap(\epsilon^{-1}),	\end{eqnarray*}
which leads to \eqref{eq:laplace-0sum}.

We then derive the density for $\tilde u= \beta u_1 + (1-\beta) u_2$, where $\beta\in (0, 1)$ (the case of $\beta=0$ or $1$ is trivial). This  covers the $L_1$ projection case, where any $\beta\in [0, 1]$ is acceptable, and the $L_2$ projection case, where $\beta=1/2$.   Since $u_1=\beta^{-1}[\tilde u - (1-\beta) u_2]$, the Jacobian from $(\tilde u, u_2)$ to $(u_1,  u_2)$ is $\beta^{-1}$. Consequently, 
\[
p_\epsilon\left(\tilde{u},u_{2}\right)=\frac{\epsilon^{2}}{16\beta}\exp\left\{ -\frac{\epsilon}{2\beta}\left[\left|\tilde{u} - (1-\beta)u_{2}\right|+\beta\left|u_{2}\right|\right]\right\}.
\]
To derive $p_\epsilon(\tilde u)$, we assume without loss of generality $\tilde u \ge 0$. Consider $p_\epsilon\left(\tilde{u}\right) = \int p_\epsilon\left(\tilde{u},u_{2}\right)du_{2}$ on three regions: 
\begin{align*}
I_1(\beta)
&= \frac{\epsilon^{2}}{16\beta} \int_{-\infty}^0\exp\left\{ -\frac{\epsilon}{2\beta}(\tilde{u} - u_{2})\right\}d u_2
= \frac{\epsilon}{8} \exp\left\{ -\frac{\epsilon}{2\beta}\tilde{u}\right\}; \\
I_2(\beta) &= \frac{\epsilon^{2}}{16\beta} \int_0^{\frac{\tilde u}{1-\beta}} \exp\left\{ -\frac{\epsilon}{2\beta}\left[\tilde{u}+ (2\beta-1)u_{2}\right]\right\}d u_2\\
&=\frac{\epsilon}{8(2\beta-1)}\left[\exp\left\{ -\frac{\epsilon}{2\beta}\tilde{u}\right\} -  \exp\left\{- \frac{\epsilon}{2(1-\beta)}\tilde{u}\right\}\right]; \\
I_3(\beta) & = \frac{\epsilon^{2}}{16\beta} \int_{\frac{\tilde u}{1-\beta}}^\infty \exp\left\{ -\frac{\epsilon}{2\beta}\left[u_2- \tilde{u}\right]\right\}d u_2 
=\frac{\epsilon}{8} \exp\left\{ -\frac{\epsilon}{2(1-\beta)}\tilde{u}\right\}.
\end{align*}
Summing up these terms and noting the symmetry of 
$p_\epsilon$, we obtain
\begin{align}\nonumber 
p_\epsilon(\tilde u) & =\frac{\epsilon}{4(2\beta-1)} \left[\beta\exp\left\{ -\frac{\epsilon}{2\beta}|\tilde{u}|\right\} - (1\!-\!\beta)\exp\left\{ -\frac{\epsilon}{2(1\!-\!\beta)}|\tilde{u}|\right\}\right]\\
& =  \frac{\beta^2}{(2\beta\!-\!1)}Lap\left(2\beta\epsilon^{-1}\right) - \frac{(1\!-\!\beta)^2}{(2\beta\!-\!1)}Lap\left(2(1\!-\!\beta)\epsilon^{-1}\right).
\label{eq:nega}
\end{align}

\medskip
\paragraph{Remark I} 
The expression \eqref{eq:nega} is fascinating. It shows that the density of a convex combination of i.i.d. Laplace random variables is a ``mixture" but non-convex combination of two Laplace densities with different scale parameters, because 
\[
\frac{\beta^2}{(2\beta-1)}+ \left[- \frac{(1-\beta)^2}{(2\beta-1)}\right] = 1.
\] 
That is, although the two weights add up to one, they always take the opposite sign when $\beta\not=1/2.$

\paragraph{Remark II} 
When $\beta=1/2$, the expression \eqref{eq:nega} is of $0/0$ appearance, but is well-defined once taking the limit $\beta\rightarrow 1/2$ and using the L'Hospital's rule, yielding
\begin{equation}
p_\epsilon(\tilde u) = \frac{\epsilon}{4}\left[1+\epsilon |\tilde u|\right]\exp\left\{-\epsilon |\tilde u|\right\}.\label{eq:bhalf}  
\end{equation}
This can be written as 
\begin{eqnarray*}
p\left(\tilde{u}\right)
 & = & \frac{1}{2}\cdot\frac{\epsilon}{2}\exp\left\{ -\epsilon\left|\tilde{u}\right|\right\} +\frac{1}{2}\cdot\frac{\epsilon^{2}}{2}\left|\tilde{u}\right|\exp\left\{ -\epsilon\left|\tilde{u}\right|\right\}\nonumber \\
 & \sim & \frac{1}{2}Lap\left(\epsilon^{-1}\right)+\frac{1}{2}SgnGamma\left(2,\epsilon^{-1}\right),
\end{eqnarray*}
suggesting that it is more variable that $Lap(\epsilon^{-1})$. We now prove that \eqref{eq:bhalf}, moreover the entire family of distributions given by \eqref{eq:nega} as indexed by $\beta\in (0, 1)$, is $\epsilon$-differentially private. In fact, they attain a level of privacy protection more stringent than $\epsilon$  whenever $\beta\not=1/2$. Our proof relies on the following general result, which can be useful for verifying  differential privacy guarantees in other situations. 

\begin{theorem} Suppose $f(x)$ is a positive real-valued function on a normed vector space $\mathcal{X}$, with its norm denoted by $|x|$. Suppose $f(x)$ has the following properties: 
\begin{description}
\item[(i)] $f(x)$ is monotone decreasing in $|x|$;
\item[(ii)] $g_\alpha(x)=f(x)e^{\alpha|x|}$ is monotone increasing in $|x|$, where $\alpha$ is a positive constant.
\end{description}  
Then for any $a\in \mathcal{X}$ and $b\in \mathcal{X}$, we have 
\begin{equation}\label{eq:gbound}
    \sup_{x\in \mathcal{X}} \frac{f(x-a)}{f(x-b)}\le e^{\alpha|a-b|}.
\end{equation} \label{thm:key}
\end{theorem}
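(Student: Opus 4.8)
The plan is to reduce the two-parameter supremum to a single translation and then dispose of it by a dichotomy on norms. First I would substitute $y = x - b$ and set $c = a - b$, so that $x - a = y - c$ and the claim \eqref{eq:gbound} becomes the pointwise statement that $f(y-c) \le e^{\alpha|c|} f(y)$ for every $y \in \mathcal{X}$. Since $f$ is everywhere positive, this pointwise inequality is equivalent to the ratio bound, and establishing it uniformly in $y$ immediately yields the supremum bound.

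Next I would split into two regimes according to whether the translate has larger or smaller norm than $y$. When $|y - c| \ge |y|$, property (i) (monotone decreasing in the norm) gives directly $f(y-c) \le f(y)$, and since $e^{\alpha|c|} \ge 1$ (because $\alpha > 0$) the desired inequality holds trivially. When $|y-c| \le |y|$, I would instead invoke property (ii): because $g_\alpha$ is monotone increasing in the norm, $|y-c| \le |y|$ yields $f(y-c)\,e^{\alpha|y-c|} \le f(y)\,e^{\alpha|y|}$, and hence $f(y-c) \le f(y)\,e^{\alpha(|y|-|y-c|)}$.

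The only genuine estimate is to control the exponent $|y| - |y-c|$ in this second regime. Here the reverse triangle inequality $|y| \le |y-c| + |c|$ gives $|y| - |y-c| \le |c|$, so that $e^{\alpha(|y|-|y-c|)} \le e^{\alpha|c|}$; this completes the second case and matches the target exponent exactly. Taking the supremum over $y$ and recalling $c = a - b$ then produces \eqref{eq:gbound}.

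I expect no serious obstacle here; the only point requiring care is the bookkeeping of which hypothesis governs which regime. Property (i) handles the easy direction where the shift enlarges the norm, while property (ii) paired with the reverse triangle inequality is precisely what generates the factor $e^{\alpha|a-b|}$ when the shift shrinks the norm. It is worth emphasizing that both assumptions are essential: without (ii) the norm-decreasing regime is uncontrolled, and it is exactly the interplay between the two monotonicities that pins the constant at $e^{\alpha|a-b|}$ rather than something larger.
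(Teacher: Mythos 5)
Your proof is correct and is essentially the paper's own argument: the same dichotomy on whether the shift enlarges or shrinks the norm, with property (i) disposing of the first regime and property (ii) combined with the triangle inequality $|y| \le |y-c| + |c|$ producing the factor $e^{\alpha|a-b|}$ in the second. The only cosmetic difference is your preliminary substitution $y = x-b$, $c = a-b$, which the paper skips by working directly with $|x-a|$ and $|x-b|$.
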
 
\begin{proof}  For any $x\in \mathcal{X}$, if $|x-a|>|x-b|$, then $f(x-a)\le f(x-b)$ by (i) and hence \eqref{eq:gbound} holds trivially.   If $|x-a|\le |x-b|$, then $g_\alpha(x-a)\le g_\alpha(x-b)$ by (ii), and hence 
\begin{align*}
    \frac{f(x-a)}{f(x-b)} & =\frac{g_\alpha(|x-a|)}{g_\alpha(|x-b|)}e^{\alpha(|x-b|-|x-a|)}\\
    & \le e^{\alpha(|x-b|-|x-a|)}\le e^{\alpha|a-b|}.
\end{align*}
\end{proof}
To apply this result, we first note that 
for $p_\epsilon(x)$ of \eqref{eq:nega} with $x>0$, its derivative is given by
\begin{align}\nonumber 
\frac{d p_\epsilon(x)}{dx} & =\frac{\epsilon^2}{8(2\beta-1)} \left[ \exp\left\{ -\frac{\epsilon}{2(1\!-\!\beta)}x\right\}- \exp\left\{ -\frac{\epsilon}{2\beta}x\right\}\right]<0,
\end{align}
for any $\beta\not=1/2$. For $\beta=1/2$, we can directly verify from \eqref{eq:bhalf} that
\begin{align}\nonumber 
\frac{d p_\epsilon(x)}{dx} & =-\frac{\epsilon^3}{4} x\exp\left\{ -\epsilon x\right\}<0,
\end{align}
Hence, condition (i) holds for \eqref{eq:nega} for all $\beta\in (0, 1)$. 

To establish condition (ii), the choice $\alpha$ is
the key since it directly governs the degree of privacy guarantee. From the expression \eqref{eq:gbound}, we want the smallest $\alpha$ such that condition (ii) holds, which gives us the tightest bound hence better privacy guarantee. A good strategy here is to start with $\alpha=c\epsilon$ and let the mathematics tell us how to minimize over $c$. We start with the simplest case with $\beta=1/2$. From the expression \eqref{eq:bhalf}, the smallest $c$ that can make $g_\alpha$ monotone increasing is clearly 1. The resulting $g_\alpha$ also has the  property that
\begin{align}\label{eq:limt}
    \lim_{|x|\rightarrow \infty}  \frac{g_\alpha(|x-a|)}{g_\alpha(|x-b|)}=1.
\end{align}
This implies that the bound $e^{\epsilon |a-b|}$ can be approached arbitrarily closely by letting $|x| \rightarrow\infty$, which means that the privacy loss budget $\epsilon$ cannot be reduced.  For our current application, this means the post-processing by $L_2$ projection is also differentially private at level $\epsilon$, but not more stringent than that.

When $\beta\not= 1/2$, we assume without loss of generality $\beta>1/2$. Then for $g_\alpha(x)=e^{c\epsilon |x|} p_\epsilon(x)$, it is easy to verify that for any $x>0$,
\begin{align}\nonumber 
\frac{d g_\alpha(x)}{dx}\!=\!\frac{\epsilon^2}{8(2\beta-1)} 
\!\left[w_c\exp\left\{ \frac{w_c}{2\beta}\epsilon x\right\}\!+\![w_c+c']\exp\left\{-\!\frac{w_c+c'}{2(1-\beta)}\epsilon x\right\}\right],
\end{align}
where $w_c=2c\beta -1$ and $c'=2(1-c)$. Our job is to seek the smallest $c$ such that this derivative  is non-negative regardless of the value of $x$. Clearly the positivity holds when we set $w_c=0$, that is $c=(2\beta)^{-1}<1$, and hence $c'>0$. To show that this is the smallest possible $c$,  we see that when setting $\alpha=\epsilon/(2\beta)$, we have 
\begin{align*}
     \frac{g_\alpha(|x-a|)}{g_\alpha(|x-b|)}=
     \frac{\beta- (1\!-\!\beta)\exp\left\{ -\tau |x-a|\right\}}{\beta- (1\!-\!\beta)\exp\left\{ -\tau |x-b|\right\}},
\end{align*}
where $\tau =(2\beta-1)/(2\beta(1\!-\!\beta))>0$. Clearly as $|x| \rightarrow \infty$, the ratio above goes to $1$ regardless of the value $a$ and $b$ as long as they are fixed. Consequently, the same implication from \eqref{eq:limt} follows, that the bound $e^{\alpha|a-b|}$ can be approached arbitrarily closely with $\alpha=\epsilon/(2\beta)$, hence it cannot be further improved. That is, for post-processing via $L_1$ projection, the actual differential privacy protection achieved is $\epsilon/(2\beta)$ when $\beta>1/2$  (and  $\epsilon/(2(1-\beta))$ when $\beta<1/2$).  This makes intuitive sense. For example, when $\beta=1$ the injected noise is drawn from a single $Lap(2\epsilon^{-1})$ distribution, corresponding to a privacy loss budget of $\epsilon/2$.

In summary, for any $\beta\in [0, 1]$, the attained privacy loss budget for  $M(x)=s(x)+\tilde u$ is $\epsilon/(2\max\{\beta, 1-\beta\})$.

\section{Sampling scheme for the Double Geometric distribution}

The Double Geometric mechanism, as introduced in Definition~\ref{def:2geo}, utilizes additive noise whose cumulative mass function is given by
\begin{equation*}
F\left(u\right)=P\left(U\le u\right)=\begin{cases}
\begin{array}{c}
\frac{a^{-u}}{1+a}\\
1-\frac{a^{u+1}}{1+a}
\end{array} & \begin{array}{c}
u\le0,\\
u>0,
\end{array}\end{cases}	
\end{equation*}
with quantile function
\begin{equation*}
F^{-1}\left(v\right)=\begin{cases}
\begin{array}{c}
\left\lceil \frac{-\log v-\log\left(1+a\right)}{\log\left(a\right)}\right\rceil \\
\left\lfloor \frac{\log\left(1-v\right)+\log\left(1+a\right)}{\log\left(a\right)}\right\rfloor 
\end{array} & \begin{array}{c}
v\le\frac{1}{1+a},\\
v>\frac{1}{1+a}.
\end{array}\end{cases}	
\end{equation*}
Hence, one way to sample a Double geometric random variable is via inverse probability sampling. That is,
\begin{equation*}
	 U_{i}\sim Unif\left(0,1\right), \qquad F^{-1}\left(U_{i}\right)\sim {p_i\left(\cdot \mid \epsilon\right)},
\end{equation*}
where $p_i$ is given by \eqref{eq:2geo}. This method is implemented for all numerical examples illustrated in this paper.

\section{Performance Diagnostics of the MIS Algorithm}\label{appendix:mis}

The acceptance rate of Algorithm~\ref{alg:MIS} rate is shown in  Figure~\ref{fig:mis-dispersion} as a function of  the proposal inverse scale parameter $\tilde{\epsilon}$ . The acceptance rate is the highest in this example when $\tilde{\epsilon}$ is set to $0.6$, just slightly larger than the privacy loss budget of the unconstrained privacy mechanism ($\epsilon = 0.5$ per cell). The acceptance rate achieved is about $1.68\%$.

Figure~\ref{fig:mis-trace} shows traceplots of $10,000$ draws from Algorithm~\ref{alg:MIS} of respectively the second (in proposal index set $\mathcal{I}$) and the first (not in proposal index set $\mathcal{I}$) cells of the constrained differentially private contingency table, when $\tilde{\epsilon} = 0.6$.
 
\begin{figure}
   \centering
   \includegraphics[width=.4\textwidth]{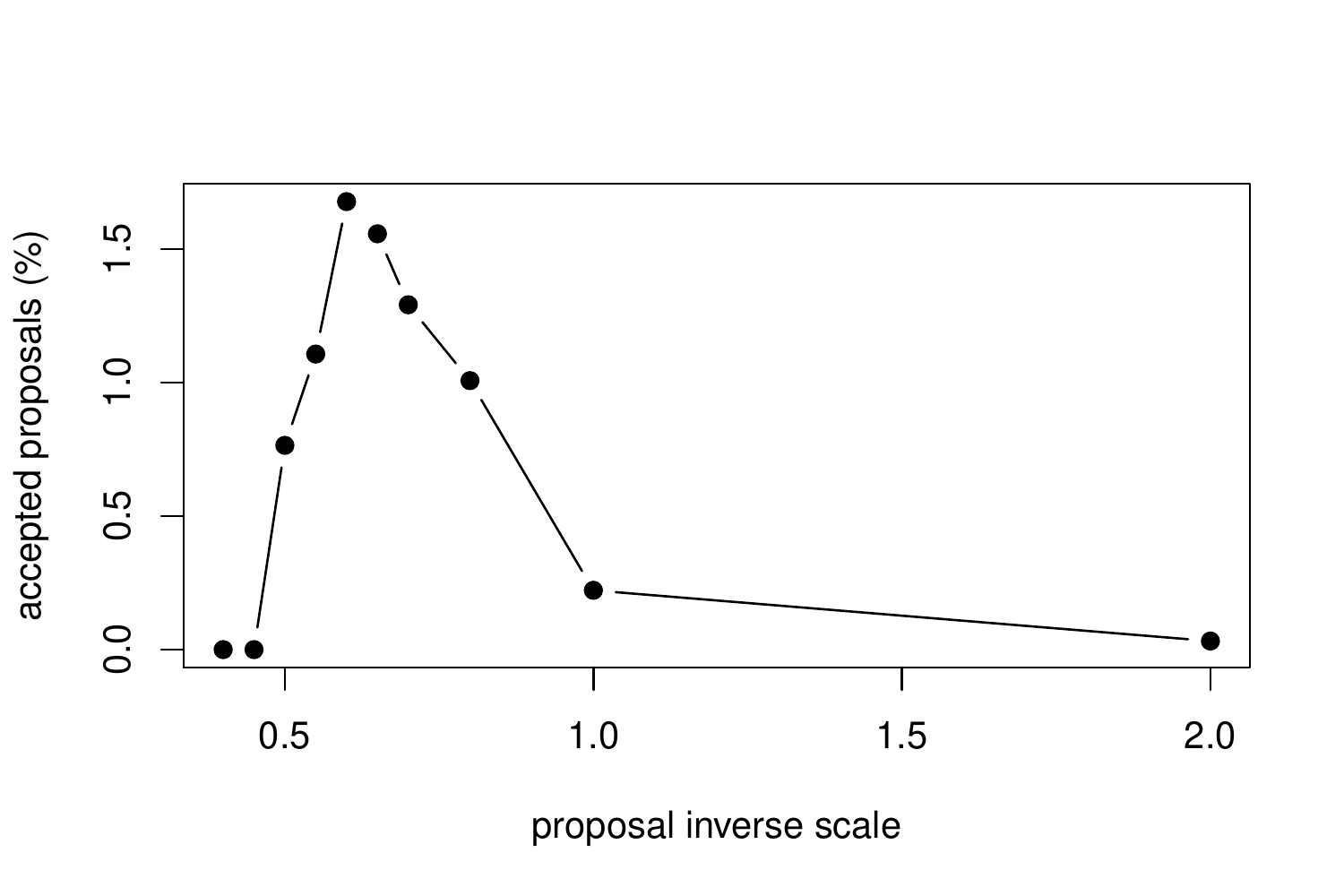}
   \caption{Algorithm~\ref{alg:MIS} acceptance rate as a function of the proposal parameter $\tilde{\epsilon}$. \label{fig:mis-dispersion}}
\end{figure}

 \begin{figure}
   \centering
   \includegraphics[width=.5\textwidth]{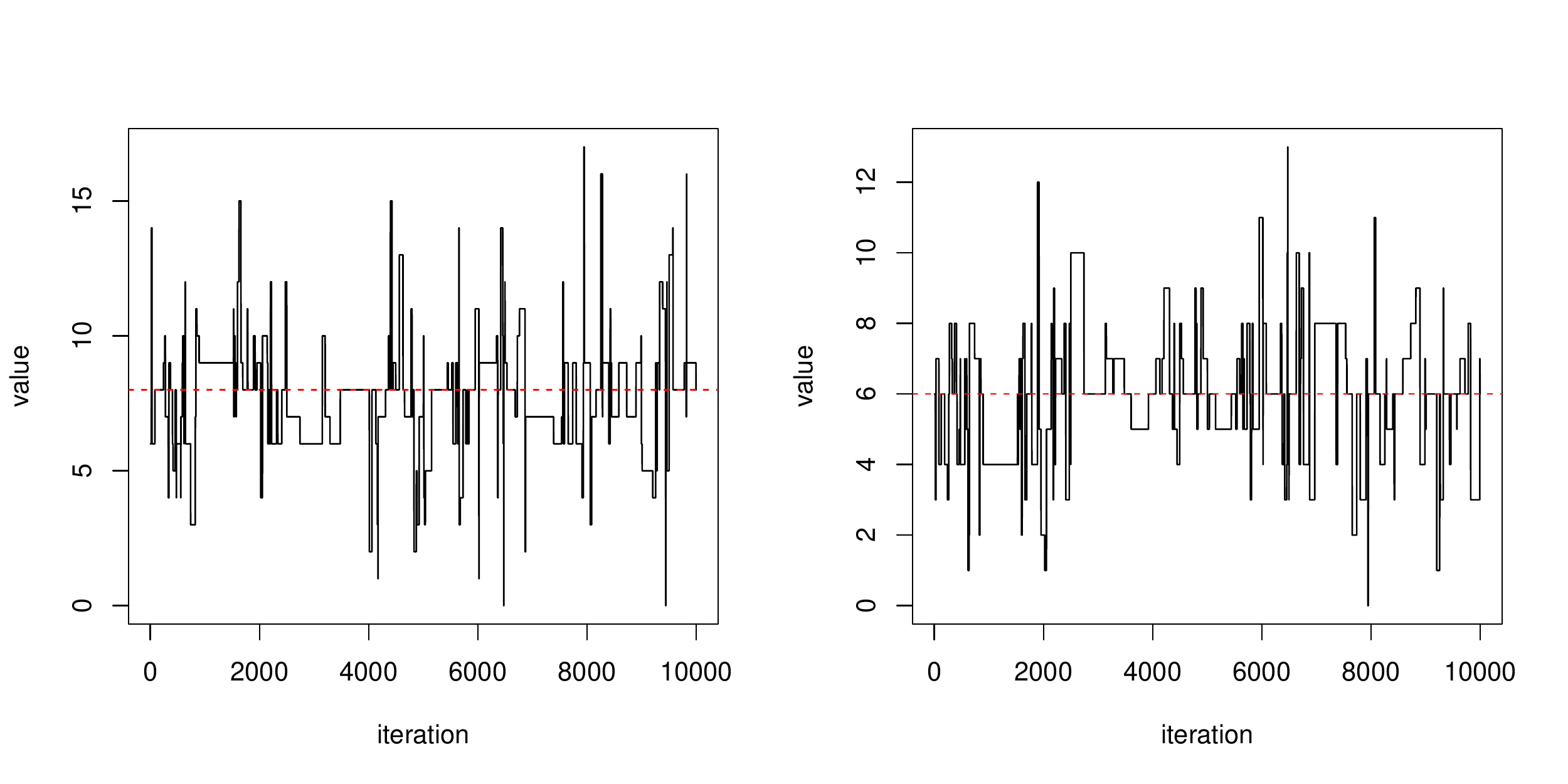}
   \caption{Traceplots of $10,000$ draws from Algorithm~\ref{alg:MIS} of the second (left) and the first (right) cell of the constrained differentially private contingency table.\label{fig:mis-trace}}
 \end{figure}

\end{document}